\documentclass[twocolumn]{article}
\usepackage[affil-it]{authblk}
\usepackage[text={6.8in,8in},centering]{geometry}
\usepackage{graphicx} 
\usepackage{amsfonts}
\usepackage{amsthm}
\usepackage{amsmath}
\usepackage[english]{babel}
\usepackage{algorithm}
\usepackage{algpseudocode}
\usepackage{makecell}

\usepackage{tikz}
\usepackage{pgf}
\usepackage{amssymb}
\usepackage{graphicx}
\usepackage{subfig} 
\usepackage{url}
\usepackage{ulem}
\usepackage{bbold}
\usepackage{hyperref}
\usepackage{authblk}
\usepackage{color}
\usepackage{qcircuit}
\usepackage{siunitx}
\usetikzlibrary{fadings}

\definecolor{ibmblue}{HTML}{002D9C}
\definecolor{ibmred}{HTML}{da1e28}
\definecolor{ibmpurple}{HTML}{8A3FFC}
\definecolor{ibmcyan}{HTML}{00539A}
\definecolor{ibmgreen}{HTML}{6fdc8c}

\newcommand{\ket}[1]{|#1\rangle}

\title{Low-overhead error detection with spacetime codes}
\author[1]{Simon Martiel\footnote{martiel@ibm.com}}
\author[2]{Ali Javadi-Abhari\footnote{ali.javadi@ibm.com}}
\affil[1]{IBM Quantum, IBM France Lab, Saclay, France}
\affil[2]{IBM Quantum, IBM T.J. Watson Research Center, Yorktown Heights, NY}

\date{}
\begin{document}

\maketitle

\begin{abstract}
We introduce a low-overhead approach for detecting errors in arbitrary Clifford circuits on arbitrary qubit connectivities. Our method is based on the framework of spacetime codes, and is particularly suited to near-term hardware since it has a much milder overhead in qubits and gates compared to error correction, while  achieving a better sampling overhead than existing error mitigation methods. We present efficient algorithms for finding valid checks that are simultaneously low weight, satisfy connectivity constraints, and cover large detecting regions within the circuit. Using this approach, we experimentally demonstrate error detection on circuits of up to 50 logical qubits containing 2450 CZ gates, and show physical to logical fidelity gains of up to $236\times$. Furthermore, we show our algorithm can efficiently find checks in universal circuits, but the space of valid checks diminishes exponentially with the non-Cliffordness of the circuit. These theoretical and experimental results suggest that Clifford-dominated circuits are promising candidates for near-term quantum advantage.
\end{abstract}

 \section*{Introduction}
\begin{figure*}[h]
\centering
  \hspace{-1.2cm}
  \raisebox{-1cm}{
    \scalebox{0.84}{
        \input{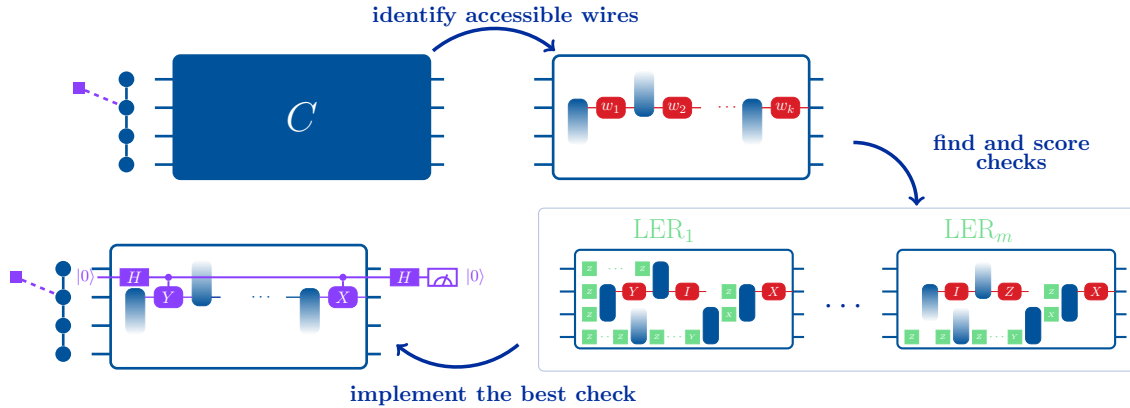}
    }
    }
  \caption{\textbf{Efficient spacetime check construction in a constrained qubit layout.} (top left) The payload circuit is transpiled onto a subset of physical qubits (blue) with a neighboring ancilla available for use (purple). (top right) The selected ancilla defines a set of accessible wires $w_1, \cdots, w_k$ within the payload circuit. These wires correspond to spacetime locations between gates and will serve as a check's support. (bottom right) We use a decoding heuristic to enumerate multiple valid checks. For each check, Monte Carlo sampling is used to estimate the logical error rate (LER) after postselecting on that check's measurement.
 (bottom left) The check with the lowest LER is implemented into the payload circuit using the chosen ancilla. By confining the check's support to accessible wires, all necessary controlled-Pauli gates remain local in the hardware connectivity.}\label{fig:overview}
\end{figure*}

Quantum computers have advanced to the stage where they can perform computations beyond the reach of exact classical simulations, raising the likelihood of achieving quantum advantage in the near future. However, quantum systems are inherently susceptible to quantum noise, which introduces errors that can compromise computational accuracy. Large-scale quantum computing is untenable without robust techniques to either mitigate or correct these errors. The ultimate objective is to develop a large-scale, error-corrected quantum processor capable of executing significantly larger circuits with reliability. Error mitigation techniques, alongside hardware improvements, have been crucial in managing these errors and enabling progress toward this goal.

Quantum error mitigation (QEM) and quantum error correction (QEC) have different overheads. In QEM, we use little to no extra qubits or gates, but instead take exponentially many samples from the quantum processor in order to counter the effect of noise in postprocessing~\cite{temme2017error, cai2023quantum}. This makes it an appealing approach in the near term due to limitations on the number of available qubits and depth of computation, and especially if the processor can perform fast sampling. On the other extreme, QEC uses a polylogarithmic or even constant qubit overhead~\cite{gottesman2013fault}, but instead does not require any extra samples. However, the qubit overhead is large even for the most efficient error correcting codes~\cite{bravyi2024high}, which severely limits the number of logical qubits that are available in early fault-tolerant quantum computers.

To bridge the gap from error-mitigated to error-corrected quantum computers, it is imperative to consider trade-offs between extra qubits and extra samples. Notably, quantum error \textit{detection} (QED) emerges as a promising approach that can use fewer extra qubits than full-blown error correction, yet achieve \textit{quartically} lower sampling overhead than standard QEM methods such as probabilistic error cancellation~\cite{tsubouchi2023universal, barron2024provable}. What's more, QED gives single-shot access to the quantum state, as opposed to QEM which is limited to expectation values (see Supplementary Information, Section I).

A central problem in designing error detection protocols is that the process of detecting errors can itself introduce new errors. It is therefore critically important to design low-overhead circuits for this task, otherwise both the postselection rate (the fraction of samples that are accepted) and the logical error rate (the fraction of accepted shots that are wrong) will be severely impacted. We use the framework of spacetime codes to achieve this~\cite{bacon_sparse_2017,gottesman2022opportunities,delfosse_spacetime_2023}. The spacetime approach is a circuit-centric approach that considers both spatial and temporal aspects of a circuit to gain flexibility, compared to a code-centric approach where circuit dynamics are often neglected. The spacetime approach is also useful for studying existing codes in a new light~\cite{mcewen2023relaxing, bombin2024unifying}.

Our approach to error detection can be considered a generalization of coherent Pauli checks (CPC)~\cite{roffe_protecting_2018,debroy_extended_2020,van2023single,gonzales_quantum_2023}. By entangling the ``payload'' to some ancillary qubits, these methods are able to ``check'' certain invariants in Clifford circuits, in particular that they map known Paulis to other known Paulis. Any hardware run of the circuit that does not satisfy all checks is flagged as erroneous and discarded. However, these methods are plagued by scalability issues because the measured Paulis are generally high weight, which is only exacerbated when qubit connectivity is constrained. The resulting check circuitry becomes large and noisy, quickly overwhelming the signal it seeks to boost.

Spacetime codes relax the constraint of having time-concentrated and space-delocalized stabilizers by allowing stabilizers to be arbitrarily spread in both space and time. Our central idea is that by employing spacetime checks --- i.e., measuring syndromes associated with a spacetime code --- we can circumvent the need to measure high-weight or non-local operators. After identifying ancilla qubits near the payload, we search for checks supported within a constant spatial radius --- in the extreme case, on the ancilla’s nearest neighbors. Such checks would be concentrated on the same set of qubits but spread over time, and their implementation incurs overhead only proportional to their weight, substantially reducing added noise.

A priori it is not clear that good checks of this form even exist, let alone whether we can find them efficiently. We answer both of these questions affirmatively. The primary contribution of our work is to reduce the problem of finding valid checks located on an arbitrary subset of spacetime coordinates to a well-known linear code decoding problem. This reduction provides us with good heuristics to generate check candidates, which can then be scored using standard simulation techniques. Figure~\ref{fig:overview} illustrates the general approach.

Our method is well-suited to circuits dominated by Clifford gates, although we also discuss extensions to universal circuits composed of Cliffords and Pauli rotations. While purely Clifford circuits are not universal for quantum computation and can be classically simulated~\cite{gottesman_heisenberg_1998}, they appear prominently within quantum algorithms and error correction protocols~\cite{huang2020predicting}, and have played an important role in proposals for quantum advantage~\cite{jozsa2013classical,bouland2017complexity,ghosh2023complexity}. Furthermore, any circuit can be transformed into purely Clifford gates acting on magic state inputs~\cite{gottesman1999demonstrating}. More importantly, the possibility of classically verifying these circuits is useful for building trust in the method and experimental data.

We experimentally demonstrate our construction on IBM Quantum Heron processors by preparing complex stabilizer states and estimating their fidelity. Single-shot access to these states is a useful resource for computation. For example, it enables universal MBQC~\cite{PhysRevLett.97.150504}, and is a candidate for quantum advantage since sampling from their output in a non-Clifford basis is believed to be classically hard~\cite{ghosh2023complexity}. We show that increasing number of checks can successfully increase the fidelity of these states, at the expense of decreasing postselection rates. That is, the quantum computer is slowed down but produces higher quality samples. 

Our experimental results show that we can improve single-shot state fidelity of random stabilizer states on up to 50 logical qubits prepared using 2,450 logical CZ gates. We do so by employing up to 68 physical qubits and 2,718 physical CZ gates. We measure fidelity gains of up to $236\times$ when using spacetime checks compared to the bare circuit. Lastly, we lower-bound the amount of entanglement in our target states by computing the rank-width of the corresponding graph states, and find that they have close to maximal entanglement for those that we can compute in reasonable time.  

Our results indicate that a significant boost in fidelity is possible using sampling and qubit overhead that is orders of magnitude smaller than QEM and QEC respectively. Yet, our protocol is compatible with both due to its single shot nature. This indicates that a serious candidate for near-term quantum advantage can be Clifford-dominated or other circuits for which efficient error detection protocols can be designed.

\section*{Efficient Spacetime checks design}\label{sec:method}

\paragraph{Spacetime codes.} Traditional quantum error correction encodes quantum memory using a stabilizer code and implements fault-tolerant operations on the encoded data. In this framework, the system's state between logical gates satisfies stabilizing constraints defined by the code. This invariant structure can be relaxed by alternating between different codes, thereby modifying the set of invariants after each logical operation. This occurs commonly in lattice surgery~\cite{horsman2012surface} and Floquet codes~\cite{hastings2021dynamically}. Although the invariants change over time, they remain time-local, meaning the valid invariant at any moment can be explicitly identified.

Spacetime codes extend this concept by generalizing these invariants to be spread across both space and time. This approach, as discussed in Refs.~\cite{bacon_sparse_2017, gottesman2022opportunities,delfosse_spacetime_2023}, specifies stabilizers that are not confined to a single moment but are distributed over a temporal and spatial domain. In spacetime circuit codes, syndrome measurements are performed on individual ``wires'' of the circuit, i.e. the segment connecting the output of one gate to the input of another (see Supplementary Information Section II-A). An \( n \)-qubit circuit organized into \( d\) layers of non-overlapping gates features up to \( nd \) wires, representing qubit timelines between each gate layer. Previous work has shown how a code can be constructed from a circuit already equipped with intermediate measurements on its wires~\cite{delfosse_spacetime_2023}. By analyzing the correlations between measurement outcomes, a set of linear equations over $\mathbb{F}_2$ is derived, which should hold true in the absence of noise. Similar to purely spatial codes, any deviation from these equations indicates the presence of errors in the circuit.

Despite this formalism, the problem of how to actually pick the Pauli measurements given a bare circuit has largely been unaddressed. It has also not been clear whether this approach is useful for near-term quantum computing. In this paper we present an efficient approach for constructing good spacetime syndrome measurements given a bare Clifford circuit. 
These syndromes define an \textit{unstructured} error detection code with no particular distance guarantees.
We present algorithms for dressing an arbitrary Clifford payload with Pauli measurements in order to minimize the extra circuit overhead from those Pauli measurements, while still maximizing their error detection capabilities. 

\begin{figure*}[h]
    \centering
    \includegraphics[scale=.5]{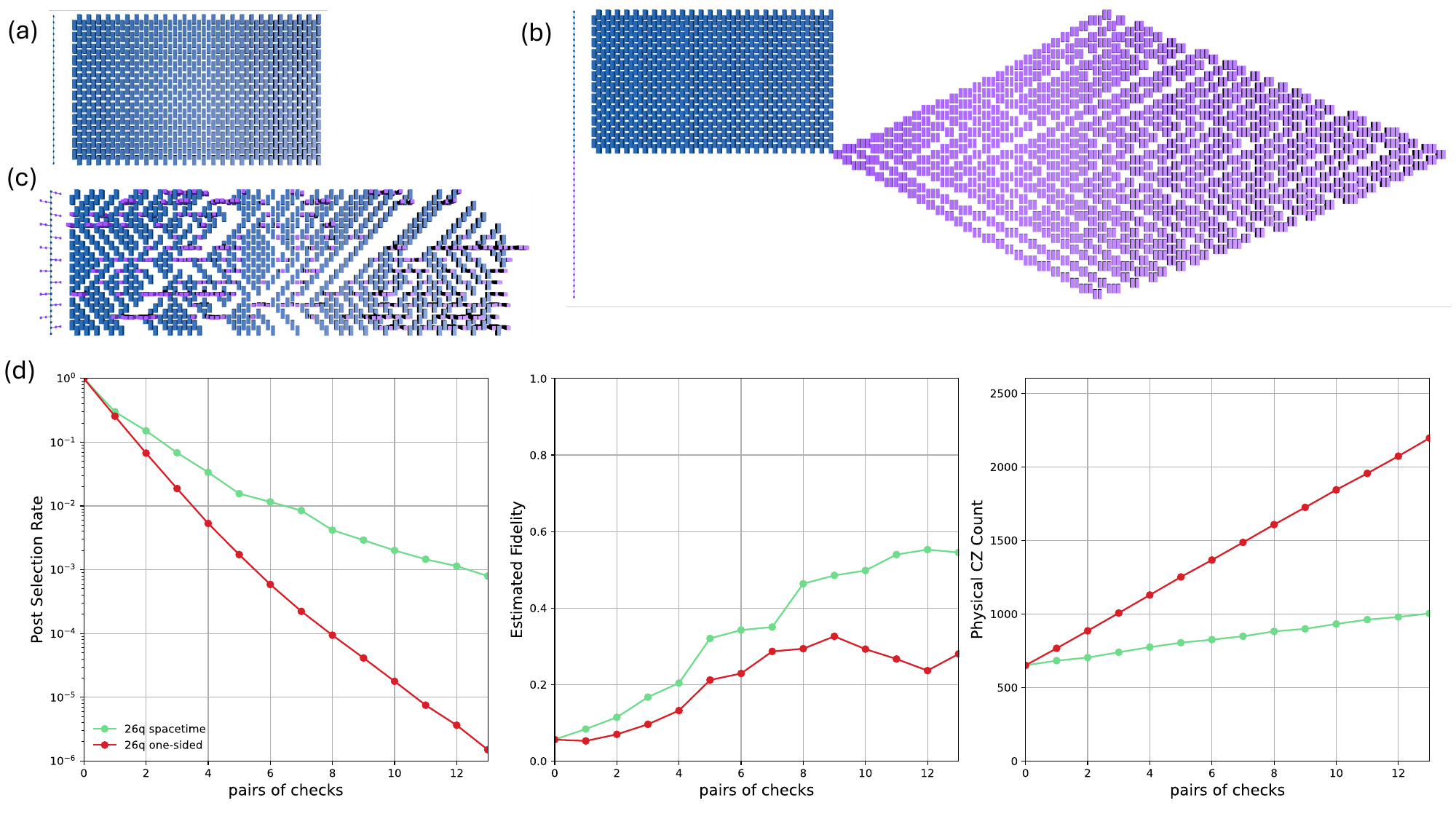}
\caption{\textbf{Comparing error detection in a Clifford circuit using spacetime checks vs. traditional stabilizer checks.} (a) The payload (blue) is a 26-qubit circuit with 52 entangling layers arranged in a linear brickwork pattern, preparing a highly entangled stabilizer state. (b) Errors can be detected by measuring stabilizers of the state at one timeslice, but the overhead (purple) of measuring multiple high-weight Paulis on restricted connectivity can be very large. Pictured are 26 checks. (c) Detecting errors using spacetime codes amounts to “sprinkling” of valid Pauli checks inside the payload, and is vastly more efficient. Pictured are 13 pairs of checks probing the payload from the sides. (d) Noisy simulation showing how improved circuit efficiency impacts error detection. As rounds of checks are added, the circuit footprint in spacetime checks grows much more mildly (right), meaning it removes errors without introducing too many new ones. This has a dramatic effect, improving postselection rate by 3 orders of magnitude (left) and logical error rate by 70\% (middle) in this example.
}
    \label{fig:lr_vs_spacetime}
\end{figure*}

Figure~\ref{fig:lr_vs_spacetime} shows a comparison between our spacetime checks approach vs. previous work that relied on sandwiching the payload between checks in a linear nearest neighbor setting~\cite{van2023single}. The smaller footprint of the spacetime approach is evident, leading to much better performance under similar noise. In this example, checks were picked in a state preparation setting where only right checks are necessary, effectively checking stabilizers of the prepared state.
This difference will only be amplified if we attempt to detect errors on unitaries rather than states, since the one-sided right checks will have to become two-sided left and right checks, effectively doubling the gate and depth overhead.

\paragraph{Spacetime coherent Pauli checks.}
Our construction involves strategically incorporating Pauli operators within the Clifford payload such that the resulting circuit remains equivalent to the original. This process assigns a (possibly trivial) Pauli operator \( P_w \) to each wire \( w \) in the payload (see Figure~\ref{fig:overview}). The goal is to ensure that when these operators are pulled to the front of the circuit, they collectively form the identity:

\begin{align*}
    \prod_w B(P, w) &= I
\end{align*}

Here, \( B(P, w) \) (called the {\it back-propagator}) represents the Pauli operator obtained after moving operator \( P \) from wire \( w \) to the start of the payload by conjugation through preceding Cliffords. This condition can be relaxed to finding operators $P_w$ such that:
\begin{align*}
    \prod_w B(P, w) \in S
\end{align*}
where $S$ is a stabilizer group describing some constraints on the input state. For example, when a circuit is applied to a stabilizer state, $S$ represents the full stabilizer group of that state. If the state is not a stabilizer but possesses certain symmetries, $S$ will be a non-maximal Abelian subgroup of the Pauli group, also known as a code.

In both cases, identifying such a set of Pauli operators provides a direct method for implementing a coherent check: introduce an ancilla qubit in state \( \ket{+} \), apply a controlled-$P$ gate from the ancilla to each wire \( w \) with a non-trivial Pauli \( P \), and finally measure the ancilla in the \( X \) basis. Ideally, the ancilla ends up in a product state with the payload qubits, yielding a deterministic measurement. However, in the presence of noise, this invariant is disrupted through phase kickback onto the ancilla, leading to random measurement outcomes. For Pauli noise, the {\it back-cumulant} associated with a check~\cite{delfosse_spacetime_2023} can precisely characterize the set of Pauli errors that flip the measurement outcomes and can thus be detected. The back-cumulant of a Pauli $P$ on wire $w$ is a Pauli operator supported across all wires of the circuit, corresponding to the trace obtained by pulling $P_w$ to the front of the circuit. This operator fully captures how a Pauli operator propagates through various spacetime locations (or wires) within the circuit (see Supplementary Information Section II for more details). By defining the back-cumulant of a check as the back-cumulant of a Pauli $Z$ operator located on the output of the ancilla qubit, we can precisely characterize the set of Pauli errors that will anti-commute with the ancilla measurement and thus be detected by the check. 

Figure~\ref{fig:rendering} shows a visualization of circuits protected with a spacetime check. Intuitively, we can probe into specific points in the spacetime volume of a quantum circuit using Pauli measurements that are spatially local yet temporally distributed. Errors happening at various points in this spacetime may trigger one or more of these checks and be discarded. This approach is especially suited for planar nearest-neighbor architectures, because the Pauli checks remain low-overhead. Instead of reaching deep into the circuit (and potentially adding more errors that they can detect), these checks rely on the circuit itself to propagate errors to the detectors. 

\begin{figure}[h]
    \centering
    \includegraphics[scale=0.45]{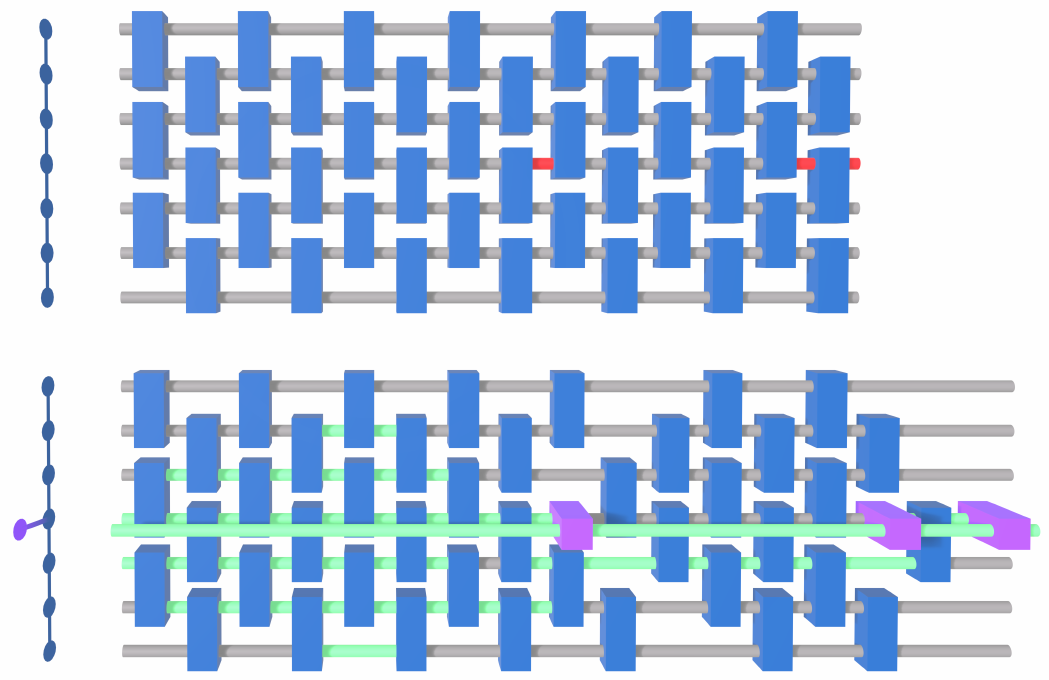}
    \caption{\textbf{3D rendering of a single spacetime check for qubits on a 2D lattice and time going left to right.} (top) the bare Clifford circuit with a valid check's support highlighted in red. Only two-qubit gates are shown. (bottom) the actual implementation of the check using an adjacent ancilla. Each red wire now receives a controlled-Pauli gate from the ancilla. Wires highlighted in green are the detecting region. They represent the non-trivial support of the check's back-cumulant, which is the operator obtained by pulling the final $Z$ measurement of the ancilla backwards and tracing the resulting Pauli path through the circuit. A low-overhead check yields a large detecting region.}
    \label{fig:rendering}
\end{figure}

Using adjoint-based code simulations \cite{delfosse_simulation_2023,gidney2021stim}, we can efficiently estimate the logical error rate after postselecting on the joint $0$ outcomes of a set of checks. These methods involve preprocessing a circuit with syndrome measurements by computing the back-cumulants of each measurement. This preprocessing enables efficient computation of syndrome outcomes for a given error, thereby accelerating any Monte Carlo estimation of the error detection rate or logical error rate.
Moreover, this technique allows us to score checks and compare their effectiveness (see Supplementary Information Section III). 

The success of this approach hinges on our ability to design checks that detect more errors than they introduce. In most quantum hardware, implementing a check incurs a significant entangling gate overhead due to limited qubit connectivity. To address this challenge, we must focus on optimizing the design of checks to minimize the entangling gate overhead, which leads us to the problem of finding valid checks within a subset of accessible wires in the payload: given a subset of wires \( L \) in the input payload (typically the set of spacetime wires accessible to an ancilla), find a valid and low-weight check whose support is within \( L \). This problem can be reduced to a decoding problem for a linear code, for which effective heuristics exist (see Supplementary Information Section IV).

Our framework also naturally extends to Clifford + Pauli rotation circuits. In this extension, one can formalize additional constraints on valid checks to ensure that adding the check to the payload will not interfere with the rotations. These constraints can be expressed as additional linear equations within our decoding framework, and solved efficiently in the same way. Naturally, those additional constraints further restrict the set of possible checks, leading to an exponential reduction in the size of the group of valid checks (see Supplementary Information Section V). This suggests that, although non-Clifford circuits are not fundamentally precluded, our method is most effective when applied to circuit segments that are Clifford or predominantly Clifford. Several classically hard circuit families of this form are known~\cite{jozsa2013classical,bouland2017complexity,ghosh2023complexity}.

Our approach can be summarized as follows (see Figure \ref{fig:overview}):

\begin{enumerate}
    \item Transpile the target payload onto a subset of qubits on the device.
    \item Identify unused ancilla adjacent to one or more payload qubits, and determine the set of payload wires accessible to each ancilla.
    \item Use a decoding heuristic to find multiple valid checks located on these wires.
    \item Score each check with Monte Carlo simulation. Typically, a good score will be inversely proportional to the logical error rate after postselection.
    \item Implement the highest-scoring check and repeat from step 2.
\end{enumerate}

The key ingredient of step 4 is to generate the full circuit, including the check we want to score and the previously picked checks, and only then estimate the logical error rate after postselection. This ensures that the gate overhead is taken into account when picking a check. In particular, we can also use multiple neighboring checks in such a way that they detect each other's errors (see Supplementary Information Section VI).   By following this structured approach, we can systematically identify and implement effective checks, optimizing their design to minimize entangling gate overhead while maximizing detection coverage, all achieved through a polynomial-time heuristic.



\section*{Preparing highly-entangled stabilizer states}\label{sec:exp}

\begin{figure*}[h]
    \centering
    \includegraphics[scale=.5]{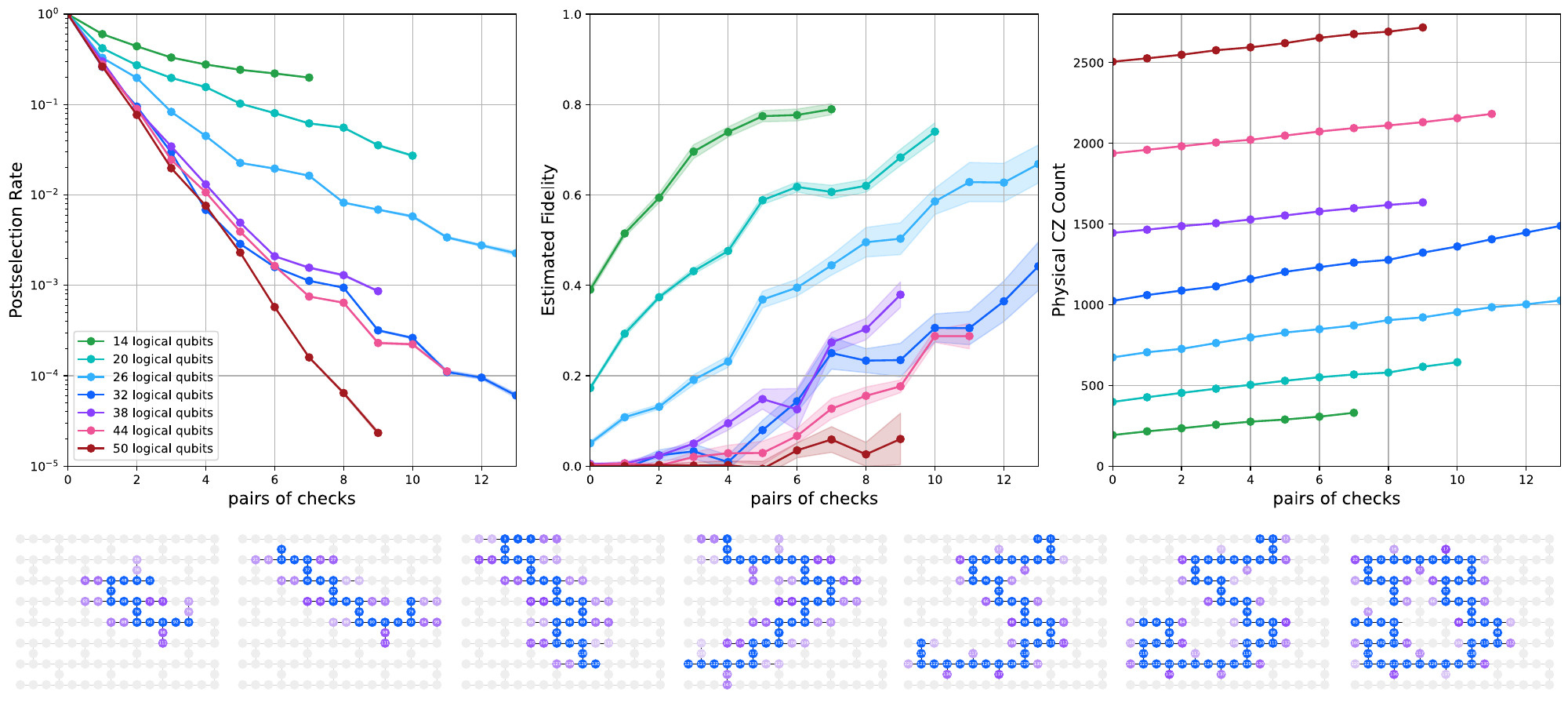}
    \caption{\textbf{Preparing high-fidelity and highly-entangled stabilizer states on superconducting qubits by error detection.} As more checks are added, the logical state fidelity goes up, at the expense of lower postselection rate. The circuits prepare a random stabilizer state over $n$ logical qubits with an entangling depth of $2n$, enough to saturate the entanglement upper bound. The number of logical gates is the number of physical gates at round $0$. From there, each round of check adds $2$ physical qubits and some number of physical gates, shown by the slopes in the right plot. 
    Fidelities are estimated by sampling a constant number of stabilizers, in this case 5. 
    Shaded regions indicate the standard error of the mean ($68\%$ confidence intervals).
    The layout of qubits on the physical hardware (\textit{ibm\_kingston}) are also depicted. The payload is supported on the blue qubits and checks on the purple qubits. Darker to lighter shades indicate the order in which checks are added.
    }
    \label{fig:expt-paths}
\end{figure*}

\begin{table*}[ht]
\centering
\begin{tabular}{|c|c|c|c|c|c|c|c|c|c|}
\hline
\multicolumn{4}{|c|}{\textbf{Bare circuit}} & \multicolumn{2}{c|}{\textbf{Sampling overhead}} & \multicolumn{4}{c|}{\textbf{This work}} \\ \hline
\textbf{Qubits} & \textbf{\makecell{Gates}} & \textbf{\makecell{Ent.\\width}} & \textbf{\makecell{LF}} & \textbf{PS} & \textbf{PEC} & \textbf{\makecell{Qubit\\overhead}} & \textbf{\makecell{Gate\\overhead}} & \textbf{\makecell{Sampling\\overhead}} & \textbf{\makecell{Fidelity\\gain}} \\ \hline
14   & 182  & $[4,5]$ & 0.960    & \num{2.9e0}  & \num{6.9e+01} & 14 & 148  & \num{5.0}   & $2\times$     \\ \hline
20   & 380  & $[5,6]$ & 0.937  & \num{1.4e1}  & \num{3.3e+04} & 20 & 263  & \num{3.7e1}  & $4\times$     \\ \hline
26   & 650  & $[7,9]$ & 0.908  & \num{1.6e2}  & \num{5.8e+08} & 26 & 376 & \num{4.4e2} & $13\times$    \\ \hline
32   & 992  & $[9,11]$ & 0.869  & \num{8.2e3}  & \num{4.4e15}  & 26 & 496 & \num{1.7e4} & $109\times$   \\ \hline
38   & 1406 & $[10,13]$ & 0.884  & \num{1.2e4}  & \num{1.9e+16} & 18 & 228 & \num{1.2e3} & $83\times$    \\ \hline
44   & 1892 & $[9,15]$ & 0.875  & \num{1.3e5}  & \num{2.8e+18} & 22 & 288 & \num{8.9e3} & $104\times$   \\ \hline
50   & 2450 & $[8,17]$ & 0.870  & \num{1.2e6}  & \num{1.9e+24} & 18 & 268 & \num{4.3e4}  & $236\times$    \\ \hline
\end{tabular}
\caption{\textbf{Summary of experiments performed on $ibm\_kingston$}. We characterized layer fidelity (LF) for one layer of each circuit, which (due to the brickwork nature of the circuit) lets us infer the sampling overhead needed to remove all errors using either idealized postselection (PS) or probabilistic error cancellation (PEC). ``Gates'' refers to number of CZ gates. Sampling overhead for error detection is the inverse of postselection rate. The difficulty of tensor-based simulation is exponential in the entanglement width of the prepared state, for which we show known lower and upper bounds. The fidelity gains are obtained using orders of magnitude lower sampling overhead and qubit overhead than required by PEC and QEC respectively.}\label{tab:experiments}

\end{table*}

We performed experiments on random Clifford circuits of depth $2n$ with number of logical qubits ranging from $n=14$ to $n=50$, arranged on linear nearest-neighbor connectivity (a path of qubits). Such circuits are capable of preparing any stabilizer state~\cite{maslov2018shorter}. Consequently, the resulting states exhibit maximal entanglement, as measured by the entanglement width~\cite{PhysRevLett.97.150504}. Nevertheless, the experimental fidelity remains efficiently verifiable (see Supplementary Information Section VII). For comparison, a tensor-network-based simulation of a stabilizer state preparation followed by local unitaries would require approximately $(n-1)2^{5r}$ operations, where $r$ is the rank-width of the corresponding graph state (see \cite{ghosh2023complexity} for more details). In our setting, $r$ is maximal and close to $n/3$. For our 50-qubit experiment, this would require approximately $6 \times 10^{26}$ FLOPS to compute a single amplitude, making the entanglement generated in our experiment non-trivial to simulate using tensor-network-based methods. Similarly, the same task could be tackled by near-Clifford simulation techniques in time $O(n^2 2^{\alpha\cdot c\cdot n})$ with $\alpha $ as low as $0.3963$ (asymptotically) if the local unitaries can be implemented with a constant number $c$ of $T$ gates \cite{Kissinger_2022}. With $c=2$, this induces a complexity of roughly $10^{16}$ FLOPS, and pushing to $c=4$ would induce a complexity of $10^{27}$ FLOPS, similar to that of a tensor-based approach.

Figure~\ref{fig:expt-paths} shows the experimental results and qubit layouts on an IBM Heron quantum processor. We estimate the fidelity of stabilizer states by measuring in random stabilizer basis~\cite{PhysRevLett.106.230501} and postselecting on shots with trivial syndromes. We introduce a method that lets us certify single-shot state fidelity with reduced sensitivity to readout errors. Immediately before measurement, we fold the stabilizer onto a smaller set of qubits with the highest readout fidelity. This tradeoff is favorable when gates have lower errors than readouts, as is the case in our experiments. Importantly, all checks are performed prior to the folding gates, ensuring that the resulting fidelity is a true lower bound. In all experiments, we find that increasing rounds of checks successfully detect errors and discard bad shots, a result made possible by the mild qubit and gate overhead of spacetime checks. See Supplementary Information Sections VIII and IX for additional numerical and experimental data.

Table~\ref{tab:experiments} summarizes our experimental results including the overheads associated with each error detection experiment. One can better understand the sampling overheads by contrasting them with PEC on one hand and ideal post selection on another. Both of these can be found from the circuit's layer fidelity (LF)~\cite{mckay2023benchmarking, barron2024provable} and are summarized in the Table.  It confirms the fact that spacetime checks have far lower qubit overhead than error correction and far lower sampling overhead than PEC.
However, it is important to keep in mind that unlike error correction or PEC, our method is a heuristic one which does not catch all errors and is thus not an unbiased estimator. Nevertheless, a single-shot protocol such as this can be used in conjunction with other QEM or QEC techniques.

\section*{Discussion}

We have introduced a low-overhead heuristic for detecting errors in Clifford and near-Clifford circuits, and used it to experimentally prepare some of the most highly entangled quantum states reported to date. 


A more traditional approach to quantum error detection uses error detecting codes. These are codes that can detect but not correct errors. Examples include the $[[4, 2, 2]]$ and $[[8, 3, 2]]$ code, which have been demonstrated in various experiments~\cite{harper2019fault, gupta2024encoding, Bluvstein_2023, reichardt2024demonstration, reichardt2024logical, self2024protecting}. While these codes use fault-tolerant gadgets and offer detection guarantees, they have limitations regarding the set of logical gates that can be fault-tolerantly implemented. For example the $[[4, 2, 2]]$ code struggles to efficiently implement phase ($S$) gates and $CZ$ between arbitrary qubits in different code blocks. These limitations render them unsuitable for running complex circuits such as those in our experiments, unless one accepts using a single qubit per code block, which results in a $4\times$ space overhead. Even in that case, the circuit would need to be rewritten using the $H+CZ+SWAP$ gate set, significantly increasing the count and depth of entangling gates. In contrast, spacetime codes encode the entire circuit and not individual blocks of qubits without regard to the computation performed.


The obvious downside of error detection is that it can discard too many shots. As we discussed, an exponential slowdown in the rate of computing is unavoidable in the absence of error correction. However, we argue that through optimizations such as those presented here, coupled with fast quantum hardware, classically challenging computations are foreseeable. In particular, in these experiments we performed error detection down to postselection rates of $10^{-5}$ and saw fidelities continuing to improve. On IBM's superconducting processors, $10^5$ samples correspond to approximately 25 seconds of machine time, allowing the production of low-error samples at a rate that is comparable with the raw sampling rate of other technologies such as neutral atoms \cite{Bluvstein_2023, PhysRevResearch.6.043020} or trapped ions~\cite{decross2024computational}.

Several intriguing avenues remain open for further investigation. Although our current framework focuses on error detection, a natural extension of this work would be to investigate the requirements and modifications needed to transition to error correction. This would involve developing strategies to not only detect the presence of errors but also to localize them~\cite{gottesman2022opportunities}, which may be approached from the perspective of flag fault tolerance~\cite{chao2018quantum, anker2024flag}.
Akin to assertions in classical computing, error detection verifies certain invariants in quantum programs~\cite{li2020projection}, albeit in a more sophisticated way by revealing syndromes and not the quantum state itself. It can be interesting to look beyond Clifford circuits and for other symmteries that can be checked~\cite{tsubouchi2025symmetric, bonet2018low}. Future research could also explore more precise and thorough code design tailored to specific structured circuits, such as IQP circuits or circuits with fixed structures. This could yield even greater efficiency.


\paragraph{Data availability} All circuits used in this work and corresponding experimental data can be accessed at the following link: 
\href{github.com/ajavadia/spacetime-checks}{github.com/ajavadia/spacetime-checks}

\section*{Acknowledgments}
We are grateful to Maika Takita and Abhinav Kandala for their insights on the experiments, and acknowledge valuable discussions with Ted Yoder, Ewout van den Berg, Nobuyuki Yoshioka, Andrew Cross, and Sergey Bravyi. We thank Alireza Seif for feedback on an earlier draft, and thank Sarah Sheldon and Jay Gambetta for their support throughout.
This work was partially supported by the U.S. Department of Energy, Office of Science, National Quantum Information Science
Research Centers, Co-design Center for Quantum Advantage (C2QA) under contract number DE-SC0012704.

\bibliography{biblio}
\bibliographystyle{naturemag}

\end{document}


\maketitle
\tableofcontents

\section{Relation between error detection and error mitigation/correction}

In this section we compare error detection to error mitigation and correction in terms of various parameters that affect their applicability and overhead. Table~\ref{tab:comparison} summarizes this discussion.

Consider $F$ to be the fidelity of a quantum circuit when executed on a noisy quantum computer. One can efficiently estimate $F$ for a Clifford circuit~\cite{PhysRevLett.106.230501}, but in many instances for non-Clifford circuits as well~\cite{merkel2025clifford}. Conceptually, $F$ can be thought of as the probability of no error happening in an execution of the circuit (a ``shot''), and it falls exponentially as the circuit volume grows. In the absence of active error correction (i.e. extracting error syndromes during the circuit and correcting them before they can accumulate), there is a fundamental $1/F$ lower bound to the sampling overhead: one needs that many more shots to obtain the same number of good samples as a noiseless circuit~\cite{dutkiewicz2024error,barron2024provable}. In contrast, standard error mitigation methods that invert the noise have a sampling overhead lower bounded by $1/F^2$. This lower bound is achievable in certain settings, such as under global depolarizing noise~\cite{tsubouchi2023universal} or by utilizing large classical resources~\cite{filippov2023scalable}.

Perhaps more importantly, error detection gives ``single-shot'' access to the output of a quantum circuit. A defining feature of all QEM strategies, instead, is that they reconstruct the average of an observable in post-processing. That is, the quantum computer never actually prepares the noise-free state. This is the case for example in probabilistic error cancellation, zero-noise extrapolation~\cite{temme2017error}, tensor error mitigation~\cite{filippov2023scalable} and circuit cutting~\cite{peng2020simulating}.
However, single-shot access to quantum states is important for many quantum algorithms that work with samples, such as Shor's, Grover's, QAOA, and measurement-based quantum computing (MBQC). Single-shot sampling of quantum states is also a cornerstone of many hardness results in classical simulation of quantum circuits~\cite{bremner2017achieving, bouland2018quantum}, whereas expectation values appear easier for classical computers to simulate~\cite{bravyi2021classical}.

Error correction is in principle possible with constant overhead~\cite{gottesman2013fault}, but the barrier of entry can be quite high. Even the most efficient codes require 1-2 orders of magnitude more qubits at non-trivial code distances~\cite{bravyi2024high}, and efficient codes typically have more demanding hardware requirements~\cite{bravyi2010tradeoffs}. Error detection however can be achieved with fewer qubits as we have demonstrated, because it only needs to detect the presence of an error but not resolve its type or location. However once hardware capabilities reach that point, large-scale fault tolerant quantum computing is expected to achieve much larger universal quantum computations.

It must be noted however that full PEC and high-distance QEC practically remove all errors from the computation. While this is also true for error detection in the ideal case, the heuristics we have presented in this paper do not detect all errors (for example errors that occur towards the end of the computation will not trigger the checks and will go undetected). It is therefore important to keep in mind that current heuristics provide a low-overhead boost to the signal, and they must be combined with other error mitigation approaches to obtain unbiased estimates of a circuit's observables.

\begin{table}[h!]
\centering
\begin{tabular}{|c|c|c|c|}
\hline
\textbf{Method} & \textbf{Sampling overhead} & \textbf{Qubit overhead} & \textbf{Single shot} \\ \hline
Error mitigation / PEC & $\Omega(1/F^2)$ / $\Omega(1/F^4)$ & 0 & $\times$  \\ \hline

Error detection & $\Omega(1/F)$ & small constant & \checkmark \\ \hline

Error correction & $\Omega(1)$ & large constant & \checkmark \\ \hline
\end{tabular}
\caption{Comparing the properties and overhead of error detection vs. error mitigation and correction}
\label{tab:comparison}
\end{table}

\section{Spacetime Pauli checks}
This section expands on the concept of spacetime coherent Pauli checks within Clifford circuits, detailing how these checks are implemented to detect errors by non-destructively measuring Pauli operators at specific spacetime locations in the circuit. 

\subsection{Circuit wires as spacetime coordinates}

The main idea behind spacetime codes is first to notice that a circuit, described as a sequence of gates acting on a set of qubits, naturally induces a set of space and time coordinates. Literature slightly differs on the details of this construction (see \cite{gottesman2022opportunities, bacon_sparse_2017, delfosse_spacetime_2023}). We propose here yet another construction that is built directly on the circuit's wires, when expressed as a directed acyclic graph (DAG) of gates.

\paragraph{Circuit DAG and coordinates.} Let $C$ be a circuit over $n$ qubits. Its DAG $\mathcal{G}(C)$ has vertices for the unitary gates of $C$, and edges for the input/output relations between these gates. Moreover, we extend this DAG with $n$ dummy input gates and $n$ dummy output gates, one for each qubit in the circuit (see Figure~\ref{fig:dag}). Note that if the circuit is correctly formed, the resulting graph is indeed acyclic.
Consider now the line graph $\mathcal{G}(C)^\star$ of  $\mathcal{G}(C)$. $\mathcal{G}(C)^\star$ has the edges of $\mathcal{G}(C)$ as a set of vertices. Two vertices $uv$, $wx$ in 
$\mathcal{G}(C)^\star$ are connected if and only if $v = w$. In other words, edges of $\mathcal{G}(C)^\star$ are exactly the directed paths of length $2$ in $\mathcal{G}(C)$. 
Note again that the graph $\mathcal{G}(C)^\star$ is itself acyclic.
As such, it induces a partial order $\leq_C$ on its set of vertices, defined as $ e_1 \leq_C e_2$ if and only if there exists some directed path from $e_1$ to $e_2$ in $\mathcal{G}(C)^\star$.
In this work, we call the vertices of $\mathcal{G}(C)^\star$ wires. This set of wires will be denoted $\mathcal{W}_C$ or $\mathcal{W}$ when $C$ is clear from context. They will be our set of spacetime coordinates.
The relation $\leq_C$ lets us naturally decide if two wires are causally related or independent. In particular, a maximal set of causally independent coordinates will be referred to as a \textit{spacelike}. When $C$ is clear from context we might use $\leq$ to refer to $\leq_C$.

\paragraph{Spacetime Hilbert space.} Each wire carries the state of a qubit between two gates in the circuit. We can thus make an abstraction of the relations between those states and define a global spacetime Hilbert space $\mathcal{H}(C)$ composed of qubit spaces attached to each wire: $$\mathcal{H}(C) = \bigotimes_{w\in\mathcal{W}(C)} \mathbb{C}^2$$
We will later introduce Pauli operators defined over this Hilbert space. Let $P$ be some single-qubit Pauli operator and $w\in \mathcal{W}(C)$. We denote by $P_w$ the Pauli operator $P\otimes \left(\bigotimes_{x\in\mathcal{W}, x\neq w} I\right)$, or, in other words, the Pauli operator acting as $P$ on qubit $w$ and as the identity on every other qubit. To simplify notations in some settings, we might interchangeably use $(P, w)$ or $P_w$ to refer to this same operator.

\begin{figure}[h]
    \centering
    \hspace{-4.5cm}\scalebox{0.6}{
        \input{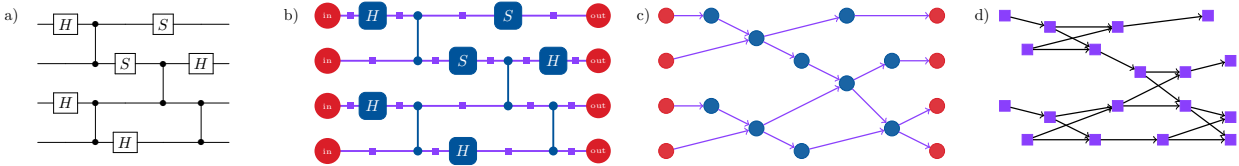}
    }
    \caption{\textbf{Wires as spacetime coordinates of a Clifford circuit.} (a) A quantum circuit $C$. (b) This same circuit with $8$ additional dummy gates representing inputs and outputs (in red). Wires are highlighted in purple. (c) The underlying directed acyclic graph (DAG) $\mathcal{G}(C)$. Its set of vertices is composed of all the gates, including the dummy gates. Its edges are exactly the input/output relations between the circuit's gates. (d) The line graph $\mathcal{G}(C)^\star$ of the underlying DAG. This graph is another DAG. Its vertices are the edges of $\mathcal{G}(C)$. Two vertices are connected if and only if the corresponding edges form a directed path of length 2 in $\mathcal{G}(C)$.}
    \label{fig:dag}
\end{figure}

\subsection{Spacetime checks}

\begin{figure}[h]
    \centering
    \scalebox{0.4}{
        \begin{tikzpicture}
    \draw[rounded corners, fill=black!10] (0, 0) rectangle (10, 10);

    \draw[dashed, fill=gray!10] plot [smooth cycle, tension=0.] 
        coordinates 
        {
            (1, 10)(3.5, 6.7)(4.6, 6.7)(4.6, 4.9)(3.7, 4.9)(3.7, 4)(2.6, 4)(1.5, 0)(0, 0)(0, 10)
        };

    \draw[rounded corners, fill=white] (4, 5) rectangle (4.5, 6.5);
    \draw[rounded corners, fill=white] (4, 3.3) rectangle (4.5, 4.8);
    \draw[rounded corners, fill=white] (5.2, 1.65 + 2.5) rectangle (5.7, 2.4 + 3.25);
    \draw[rounded corners, fill=white, ] (4-1.2,  1.65 + 2.5) rectangle (4.5-1.2, 2.4 + 3.25);
x
    \draw[red] (4.5, 5.3) edge node[above]{$w$} (5.2, 5.3);
    \draw (4.5, 4.8 - 0.3) edge (5.2, 4.8 - 0.3);
    \draw (4.5, 3.3 + 0.3) edge (5.2, 3.3 + 0.3);
    \draw (4.5, 6.5 - 0.3) edge (5.2, 6.5 - 0.3);

    \draw (5.7, 5.3) edge  (6.2, 5.3);
    \draw (5.7, 4.8 - 0.3) edge (6.2, 4.8 - 0.3);
    \draw (3.3, 5.3) edge  (4, 5.3);
    \draw (3.3, 4.8 - 0.3) edge (4, 4.8 - 0.3);
    \draw (3.3, 3.3 + 0.3) edge  (4, 3.3 + 0.3);
    \draw (3.3, 6.5 - 0.3) edge (4, 6.5 - 0.3);

    \draw (1.5, 6) node {\huge{$A_w$}};
    \draw (7.5, 6) node {\huge{$B_w$}};

\end{tikzpicture}
    }
    \caption{\textbf{Construction of the back-propagator \( B(P, w) \)}. Wire \( w \) determines a bi-partition of the circuit \( U \) into \( U = B_w A_w \). The back-propagator is a Pauli operator obtained from ``pulling'' \( P\) backwards through \(A_w\).}
    \label{fig:backprop}
\end{figure}
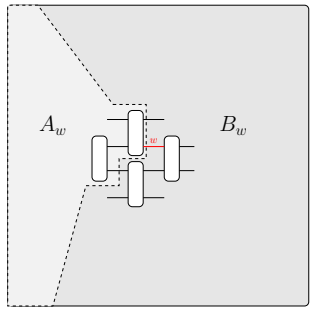

Consider the coherent Pauli check framework~\cite{debroy_extended_2020, gonzales_quantum_2023, van2023single,tsubouchi2025symmetric}, where a Clifford circuit \( C \) is checked by left and right Pauli measurements \( P \) and \( P' \). When \( C.P.C^\dagger = P' \), the measurements are expected to agree. Disagreement indicates a logical error, which can be detected. Since adding many noisy checks in this way can be expensive, we seek to generalize this framework to insert checks {\it inside} the payload as well. We also wish to reduce the cost of checks when the payload is applied to a set of qubits in a stabilizer state.


Let \( C \) be a Clifford circuit applied to a stabilized state \( \ket{\psi} \) with stabilizer group \( S \). Let the set of wires \( \mathcal{W} \) of \( C \) be the spacetime locations of \( C \) in-between its gates, and consider a wire \( w \) in \( \mathcal{W} \). We decompose \( C \) as \( C = B_w A_w \), where \( A_w \) and \( B_w \) are Clifford circuits containing gates before and after \( w \), respectively (see Figure \ref{fig:backprop}). In practice, this bipartition $A_w,B_w$ of the circuit can be directly inferred by using a topological ordering of $\mathcal{G}(C)^\star$. Taking all the gates adjacent to wires $x$ such that $x$ lies before $w$ in a topological ordering will produce a valid $A_w$. 
Let $P\in\{X, Y, Z\}$, $w\in \mathcal{W}(C)$. We define the back-propagator \( B(P, w) \) as the Pauli operator \( A_w^\dag P A_w \) for \( P \in \{X, Y, Z\} \). A valid check is a collection of Paulis on some wires \( \{(P_1, w_1), \ldots, (P_k, w_k)\} \) such that:

\begin{align} 
    \prod_i B(P_i, w_i) &= I \label{eq:reduces_to_i}
\end{align}

up to a global phase\footnote{See Section~\ref{subsec:phase}}. This condition can be relaxed to:

\begin{align}
    \prod_i B(P_i, w_i) \in S\label{eq:reduces_to_s}
\end{align}
in the case where the stabilizer group $S$ is non-trivial.

This formulation encompasses the left-right setting and can be expressed at any spacelike slice in the circuit.
In other words, spacetime checks are a set of Pauli operators spread throughout a Clifford circuit and such that their overall contribution cancels out (up to some stabilizer group $S$).

\subsection{On global phases of checks}\label{subsec:phase}

It is convenient to restrict the set of injected Paulis to $\{X, Y, Z\}$. However, by doing so, one might induce a global phase on the circuit when injecting the check. In other words, one needs to relax Eq.~\ref{eq:reduces_to_i} and Eq.~\ref{eq:reduces_to_s} to:
\begin{align} 
    \prod_i B(P_i, w_i) &\in \{\pm I, \pm i I\}\label{eq:reduces_to_i_with_phase}
\end{align}
and
\begin{align}
    \prod_i B(P_i, w_i) \in \pm S \cup \pm i S\label{eq:reduces_to_s_with_phase}
\end{align}

In practice, this global phase is ignored when picking checks and only considered when the check is actually implemented in the circuit, because at that point the controlled-Pauli turns it into a relative phase which has an observable effect. At that stage, we infer the phase by computing the left hand side of Eq.~\ref{eq:reduces_to_i_with_phase} (or Eq.~\ref{eq:reduces_to_s_with_phase}). The phase is then compensated by injecting a single qubit phase gate on the ancilla's wire (see Figure~\ref{fig:phase} for an example).

\begin{figure}[h]
    \centering
   \[  \Qcircuit @C=1.0em @R=1.2em @!R { 
       a)&   & \gate{Y} & \gate{H} & \gate{Y} &\qw \\
        b)  && \gate{-I} & \gate{H} & \qw &\qw
     }  ~~~~ \Qcircuit @C=1.0em @R=1.2em @!R { 
     c)  &         & \qw     & \gate{Y}  & \gate{H} & \gate{Y}  &  \qw        & \qw&\\
       & \gate{H}&\gate{Z} & \ctrl{-1} & \qw      & \ctrl{-1} &  \gate{H}   & \qw&
     } 
     \]
    \caption{\textbf{Phase-correct implementation of checks}. (a-b) Example global phase induced by a check. (c) This global phase introduces a relative phase when implementing the check, which must be compensated by adding a Clifford gate in $\{Z, S, S^\dagger\}$ on the ancilla.}
    \label{fig:phase}
\end{figure}
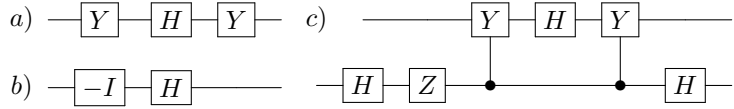

\section{Computing check coverage} \label{sec:cumulant}
Here we detail our methodology for evaluating the effectiveness of spacetime checks in detecting errors. We introduce the concept of a check's back-cumulant~\cite{delfosse_spacetime_2023}, which captures the error detection capability of a check, and describe how Monte Carlo sampling is used to estimate the logical error rate of a checked payload.

\subsection{Check back-cumulant}
For simplicity, let us assume that the circuit undergoes Pauli noise located on each wire in between gates. More general noise channels can be expressed as Pauli noise after twirling; nevertheless, we find Pauli noise approximates experimental data well (see Section~\ref{sec:sim}). Let \( \mathcal{E}_x \) be the noise channel on wire \( x \). 
A single Pauli error \( P \) from \( \mathcal{E}_x \) is detected by a check \( \{(Q_1, w_1), \ldots, (Q_k, w_k)\} \) if and only if \( |\{i \in [k], x \leq w_i \text{ and } [B(Q_i, w_i), B(P, x)] \neq 0\}| \) is odd. 
Similarly, multiple errors \( P_1, \ldots, P_l \) occurring in channels $\mathcal{E}_{x_1}, ..., \mathcal{E}_{x_l}$ are detected if and only if \( |\{(i, j) \in [k]\times [l], x_j \leq w_i \text{ and } [B(Q_i, w_i), B(P_j, x_j)] \neq 0\}| \) is odd. This condition utilizes the back-propagator operator to pull checks and errors to the beginning of the circuit and express their (anti-)commutativity in that frame. It is possible to simplify this expression by considering an operator called the \textit{check's back-cumulant}, as introduced in Ref.~\cite{delfosse_spacetime_2023} (also related to the ``spackle'' operator in Ref.~\cite{bacon_sparse_2017} and ``detecting regions'' in Ref.~\cite{mcewen2023relaxing}).

For a single Pauli \( (P, w) \), its back-cumulant \( \overleftarrow{B}(P, w) \) is the Pauli operator over $\mathcal{H}(C)$ obtained by propagating \( P_w \) to the front of the circuit, leaving a trace throughout the circuit's wires. Contrary to the back-propagator $B(P, w)$ which acts on a set of $n$ qubits, this operator is supported on the full set of wires $\mathcal{W}$. The back-cumulant of a check \( C = \{(P_1, w_1), \ldots, (P_k, w_k)\} \) is:

\begin{align*}
    \overleftarrow{B}(C) = \prod_i \overleftarrow{B}(P_i, w_i)
\end{align*}

An error \( P \) on wire \( w \) is caught by \( C \) if and only if \( [\overleftarrow{B}(C), P_w] \neq 0 \).


Using the back-cumulant, we can determine if a set of physical errors induces a logical error. For our experiment, a logical error is any final Pauli error outside the expected stabilizer group. Given errors \( P_1, \ldots, P_n \), we check if there exists a stabilizer \( S \in \mathcal{S}_{final} \) such that \( [\overleftarrow{B}(S), \prod (P_i, w_i)] \neq 0 \).

In practice, we compute the back-cumulant by first building the circuit that implements the check, then computing the back-cumulant of a single \( Z \) operator on the check qubit's output wire, corresponding to the check qubit's measurement. This requires conjugating a Pauli operator through all gates, which can be done in \( O(nm) \) time, where \( n \) and \( m \) are the number of qubits and gates, respectively. Building the check circuit first ensures that we also consider the extra gates and wires that come from the checks themselves.

Figure~\ref{fig:full_example} shows an example of how a low-overhead check can detect errors on a large region of the circuit. 
Counterintuitively, our spacetime check design excels when applied to deep circuits. For a space-local check to effectively detect errors across a significant portion of the payload, the payload must be sufficiently deep to ``spread'' the check's back-cumulant over a large number of internal wires. For shorter circuits, we can increase the radius of checks to achieve better coverage, but this must be balanced against the overhead of probing deeper into the circuit.

\begin{figure}[h]
    \centering
    \hspace*{-3cm} 
    \includegraphics[scale=0.85]{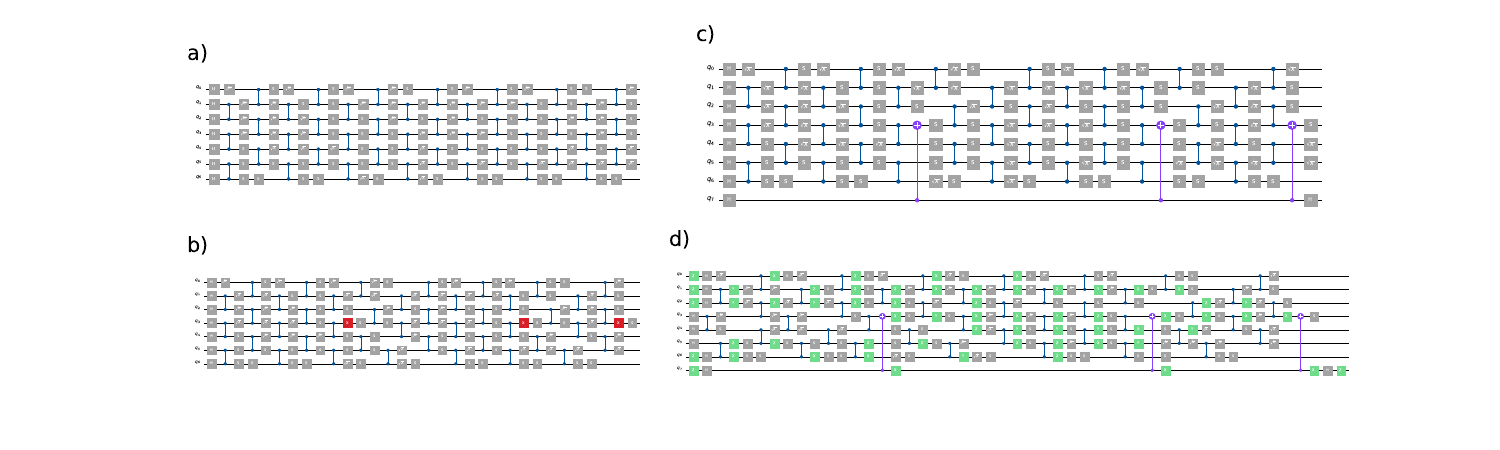}
    \caption{\textbf{Large circuit coverage with a low-overhead check.} (a) An example circuit with entangling gates highlighted. (b) Circuit with highlighted Pauli operators inside it, forming a valid check of the circuit. (c) Check circuitry added: an ancilla in \( \ket{+} \) state is entangled to the circuit according to the valid check, and is measured in the \( X \) basis. (d) Check's back-cumulant highlighted in green. The check will detect any fault that anti-commutes with its back-cumulant.}
    \label{fig:full_example}
\end{figure}

\subsection{Estimating logical error rate} \label{sec:coverage}

In order to know how effective any particular check is and to pick good checks, we need to be able to estimate the logical error rate of a checked payload. To do so we can Monte Carlo sample the sum:

\begin{align}
   LER(C_1, \ldots, C_l) &= \sum_{k=1}^{\infty} \sum_{\substack{E=(P_1, w_1)\ldots(P_k, w_k)\\\textrm{not detected}\\\textrm{logical error}}} \mathbb{P}(E) \\
   &= \sum_{k=1}^{\infty} \sum_{\substack{E=(P_1, w_1)\ldots(P_k, w_k)\\\textrm{not detected}\\\textrm{logical error}}} \prod_{i=1}^k  \mathbb{P}((P_i, w_i))
   \label{eq:coverage}
\end{align}

Physical errors can be efficiently drawn according to $\mathbb{P}(E)$ by going through all wires $w$ and drawing a Pauli according to the error rate of the channel attached to it.
The overall error is then discarded if it does not trigger a logical error or if it is caught by a check. We accumulate the probabilities of non discarded errors and renormalize this quantity by the total sum of probabilities of all sampled errors.

This quantity serves as a scoring function for checks. Given a circuit with checks \( C_1, \ldots, C_l \), a new check \( C_{l+1} \) will have score \( -LER(C_1, \ldots, C_l, C_{l+1}) \).

\paragraph{Error Model.}

To evaluate Eq.~\ref{eq:coverage}, we need a model for \( \mathbb{P}((P, w)) \). We fix a base error probability \( \varepsilon = 8e^{-4} \) for wires after 2-qubit gates, setting \( \mathbb{P}((P, w)) = \frac{\varepsilon}{3} \). For other wires, \( \mathbb{P}((P, w)) = 0 \). We also model idling noise with a depolarizing channel following each 2-qubit gate, using an ALAP scheduling with entangling gate time of \( 60ns \) and single-qubit gate time of \( 50ns \) for non-diagonal gates and \( 0ns \) for diagonal gates. The idling error probability for duration \( \tau \) is \( \varepsilon_\tau = 1-e^{-\tau/T} \) with \( T = 100\mu s \). We merge this channel with the previous gate's channel, setting \( \mathbb{P}(P, w) = \frac{\varepsilon + \varepsilon_\tau - \frac{4\varepsilon \varepsilon_\tau}{3}}{3} \) for wires after 2-qubit gates followed by idling time \( \tau \).
Even though this error model does not capture all physical sources of error, it remains qualitatively fair for our purpose: wires followed by longer delays are subject to stronger noise and it is more important to cover them with checks.

\section{Finding low-weight checks}

Having discussed what constitutes valid spacetime checks and how effective a particular check is, we now address the issue of how to pick a good check. This section formalizes the challenge of identifying efficient, low-weight Pauli checks within a given subset of wires in a circuit. We discuss the use of back-propagators and decoding heuristics to find checks that minimize overhead while maximizing error detection, ensuring that the checks are both effective and practical to implement. 

\subsection{Reduction to syndrome decoding}

 We aim to select effective checks that are cost-efficient to implement. The complexity of implementing a check is influenced by its weight and the proximity of the ancilla qubit to the data qubits with which it needs to interact. To minimize overhead, we can limit the check to a specific set of target wires that are close to the ancilla qubit. This can be wires at a constant physical radius away from the ancilla, or in the extreme case, only its nearest neighbor.
In practice, this requires us to efficiently solve the following problem:
\medskip

\noindent\textbf{Problem 1: Finding a low weight valid check}
\smallskip

\noindent\textbf{Input:} A subset of (accessible) wires \( L \subseteq \mathcal{W} \)
\smallskip

\noindent\textbf{Output:} Pauli assignments  \( P_w \in \{I, X, Y, Z\} \) for each wire \( w \in L \), with a minimal number of non$-I$ Paulis and such that:

\begin{align}
    \prod_{w \in L} B(P_w, w) &= I  \label{eq:reduces_to_i_second}
\end{align}
\medskip

Using standard Pauli encoding and dropping phases, notice that Eq.~\ref{eq:reduces_to_i_second} can be expressed as a set of linear constraints on some solution vector. In particular, Problem 1
is equivalent to finding a codeword of minimal weight in a linear code:
\medskip

\noindent\textbf{Problem 2: Finding a minimal weight code word}
\smallskip

\noindent\textbf{Input:} A parity check matrix $B$
\smallskip

\noindent\textbf{Output:} 
\begin{equation}
\begin{aligned}
\min_{x} \quad & |x|\\
\textrm{s.t.} \quad & xB = 0
\end{aligned}
\end{equation}
\medskip

In our case, $B$ is built by stacking the Boolean Pauli encodings of the back-propagators $B(P, w)$ for all $P \in \{X, Y, Z\}$ and $w\in L$.
This problem is notoriously hard to solve optimally for a generic parity check matrix $B$ \cite{641542}. 

In our setting, it can also be useful to look for \textit{completions} of a partial collection of Paulis into a valid check, which will allow us to force checks to span certain windows of wires.
\medskip

\noindent\textbf{Problem 3: Completion into a valid check}
\smallskip

\noindent\textbf{Input:} A subset of wires \( L \subseteq \mathcal{W} \), and a partial assignment of Pauli operators $Q_w$ for $w \in J \subset L$
\smallskip

\noindent\textbf{Output:} Pauli assignments  \( P_w \in \{I, X, Y, Z\} \) for each wire \( w \in L \setminus J \), with a minimal number of non$-I$ Paulis and such that:

\begin{align*}
    \prod_{w \in L \setminus J} B(P, w) &= \prod_{w \in J} B(P, w)
\end{align*}
\medskip

By the same argument, using Boolean encoding of Pauli operators and dropping phase, this problem reduces to the following decoding problem:
\medskip

\noindent\textbf{Problem 4: Syndrome decoding}
\smallskip

\noindent\textbf{Input:} A parity check matrix $B$, a vector $q$
\smallskip

\noindent\textbf{Output:} 
\begin{equation}
\begin{aligned}
\min_{x} \quad & |x|\\
\textrm{s.t.} \quad & xB = q
\end{aligned}
\end{equation}
\medskip

Here $B$ is built by stacking the encoding of the back-propagators $B(P, w)$ for all $P \in \{X, Y, Z\}$ and $w\in L \setminus J$ and $q$ is the encoding of $\prod_{w \in J} B(P, w)$.
This problem is once again NP-Hard in the general case \cite{1055873}.

\subsection{Generalization to the stabilized case}
When the circuit input is stabilized by some input Abelian subgroup $S$, we are interested in the following problem:
\medskip

\noindent\textbf{Problem 5: Finding a low weight valid check up to stabilizer}
\smallskip

\noindent\textbf{Input:} A subset of wires \( L \subseteq \mathcal{W} \), an Abelian subgroup $S$
\smallskip

\noindent\textbf{Output:} Pauli assignments  \( P_w \in \{I, X, Y, Z\} \) for each wire \( w \in L \), with a minimal number of non$-I$ Paulis and such that:

\begin{align*}
    \prod_{w \in L} B(P_w) &\in S 
\end{align*}
\medskip

By the same means described above, this problem reduces to the following decoding problem:
\medskip

\noindent\textbf{Problem 6: Finding a minimal weight code word up to a subspace}
\smallskip

\noindent\textbf{Input:} A parity check matrix $B$, a matrix $T$
\smallskip

\noindent\textbf{Output:} 
\begin{equation}
\begin{aligned}
\min_{x, y} \quad & |x|\\
\textrm{s.t.} \quad & xB = yT
\end{aligned}
\end{equation}
\medskip

where $T$ is obtained by stacking the generators of $S$. We can eliminate the dependency on an extra variable $y$ by computing the nullspace $N$ of $T$ and solving the problem:
\medskip

\noindent\textbf{Problem 7: Finding a minimal weight code word up to a subspace}
\smallskip

\noindent\textbf{Input:} A parity check matrix $B$, a matrix $T$ of nullspace $N$
\smallskip

\noindent\textbf{Output:} 
\begin{equation}
\begin{aligned}
\min_{x, y} \quad & |x|\\
\textrm{s.t.} \quad & xBN = 0
\end{aligned}
\end{equation}
\medskip

which is as hard as Problem 2.

Similarly, one can consider the problem of finding a low weight completion of a check up to some stabilizer group. This problem reduces to the following problem:
\medskip

\noindent\textbf{Problem 8: Syndrome decoding up to a subspace}
\smallskip

\noindent\textbf{Input:} A parity check matrix $B$, a matrix $T$ of nullspace $N$, a vector $q$
\smallskip

\noindent\textbf{Output:} 
\begin{equation}\label{eq:to_identity}
\begin{aligned}
\min_{x, y} \quad & |x|\\
\textrm{s.t.} \quad & xBN = qN
\end{aligned}
\end{equation}
\medskip

This last problem is the one we will actually attempt to solve, that is to find a low-weight completion of checks in the presence of an input stabilizer group. It captures all the other problems as special cases and can be adapted to fit most applications.

\subsection{A heuristic for the syndrome decoding problem}

We established that, in the general case, finding a low weight check on a target support is hard. In practice, we are not necessarily interested in the optimal solution to that problem.
Indeed, it often happens that the optimal solution to this problem actually yields a poor quality check. Consider, for instance, the case of a brickwork circuit where $L$ is the set of wires of some target qubit.
In that setting, it can happen that the minimal check is simply a weight $2$ check ``sandwiching'' a single entangling gate. Such a check would only catch errors concentrated on that gate, and it would be very likely to introduce more errors than it catches. The hardness of the problem, coupled with this remark, leads us to consider a suboptimal but fast approach that works well in practice.
\smallskip

\noindent In order to solve Problem 8, we take the following greedy approach:
\begin{enumerate}
    \item find $ \operatorname{min}_{i} |B[i] \oplus q| $
    \item update $q \gets B[i] \oplus q$
    \item if $|q| > 0$ go to $(1)$
\end{enumerate}

In other words, we greedily add to the solution the row of $B$ that lowers its Hamming weight the most. Notice that this method might not converge if no solution exists. We solve this by raising an error if no row can be found that reduces the weight of $q$.
We couple this greedy approach with a randomization of $B$. The rows of $B$ are randomly permuted, and $B$ is brought into column echelon form, replicating column operations onto $q$. This ensures that if the canonical vectors were present in $B$, they will still be present in the newly produced matrix. It is easy to notice that a solution to this new problem can be permuted back into a solution of the original problem.
This technique was previously utilized successfully to synthesize efficient Clifford circuits (see \cite{10.1007/978-3-030-52482-1_11, Goubault_de_Brugi_re_2025}).
Notice that this algorithm runs in polynomial time. $B$ can be generated in time $O(|L|mn)$ where $n$ (resp. $m$) is the number of qubits (resp. gates) in the payload. The greedy decoding algorithm has runtime $O(|L|n)$.
 









\subsection{Characterizing the group of valid checks}\label{subsec:group_size}

It is straightforward to notice that valid spacetime checks, as defined by checks satisfying Eq.~\ref{eq:reduces_to_i} or Eq.~\ref{eq:reduces_to_s} form a group. Moreover, the decoding framework introduced above gives us tools to directly access elements in that group.

Consider Problem 2 for instance. The valid checks are exactly specified by elements $x \in \mathbb{F}_2^{3|L|}$ such that $xB=0$. This is equivalent to saying that valid checks are characterized by the elements of the (left) nullspace of $B$.
This remark leads to the following Lemma.

\begin{lemma}
    Let $L$ be a subset of wires of an $n$ qubit Clifford payload. We have that the number of distinct valid checks supported on $L$ is $2^{2|L| - \operatorname{rk}(B)}$.
\end{lemma}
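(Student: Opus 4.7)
The plan is to work entirely in the Boolean (symplectic) encoding introduced just above the lemma, where each row of $B$ indexed by $(P, w) \in \{X, Y, Z\} \times L$ is the encoding of the back-propagator $B(P, w)$ and where Pauli products translate, modulo phase, to XOR of encodings. A vector $x \in \mathbb{F}_2^{3|L|}$ then encodes a Pauli assignment $P_w = X^{a_X(w)} Y^{a_Y(w)} Z^{a_Z(w)}$ on each wire $w \in L$, and the check condition $\prod_w B(P_w, w) \in \{\pm I, \pm i I\}$ becomes exactly $xB = 0$. I would begin the proof by making this correspondence explicit.

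Counting the solution set $\{x \in \mathbb{F}_2^{3|L|} : xB = 0\}$ is then immediate: it is the left nullspace of $B$, of dimension $3|L| - \operatorname{rk}(B)$, hence of cardinality $2^{3|L| - \operatorname{rk}(B)}$. The substantive work of the proof is to correct this count for the fact that different $x$'s may encode the same Pauli assignment on $L$. Per wire, the map $(a_X, a_Y, a_Z) \mapsto X^{a_X} Y^{a_Y} Z^{a_Z}$ modulo phase is two-to-one, either by direct enumeration or from the identity $XYZ = iI$, so the full encoding map $\Phi : \mathbb{F}_2^{3|L|} \to \{I, X, Y, Z\}^L$ is $2^{|L|}$-to-one, with kernel $K$ of size $2^{|L|}$ generated by the indicator vectors placing $(1, 1, 1)$ on a single wire and zeros elsewhere.

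The main technical step, and the one I expect to be the real subtlety, will be to verify that this full redundancy $K$ actually lies inside the left nullspace of $B$. This reduces to showing that for each wire $w$ the three rows $B(X, w)$, $B(Y, w)$, $B(Z, w)$ of $B$ sum to zero in the Boolean encoding, which follows from distributivity of conjugation through products:
$$B(X, w)\, B(Y, w)\, B(Z, w) \;=\; A_w^\dagger (XYZ) A_w \;=\; i I,$$
whose Boolean encoding is zero since phases are dropped. Once $K \subseteq \{x : xB = 0\}$ is established, the number of distinct valid checks is given by the quotient
$$\frac{|\{x : xB = 0\}|}{|K|} \;=\; \frac{2^{3|L| - \operatorname{rk}(B)}}{2^{|L|}} \;=\; 2^{2|L| - \operatorname{rk}(B)},$$
completing the proof. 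The same argument adapts to the stabilized setting simply by replacing $B$ with $BN$ as in Problem 7.
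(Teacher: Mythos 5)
Your proof is correct and follows essentially the same route as the paper's: both count the left nullspace of $B$ and then account for the per-wire redundancy arising from the linear dependence of the three rows $B(X,w), B(Y,w), B(Z,w)$ (equivalently $XYZ = iI$). The paper handles the redundancy by deleting the $Y$ rows to make the encoding of checks bijective, whereas you keep all $3|L|$ rows and quotient by the kernel $K$ generated by the per-wire $(1,1,1)$ vectors --- two presentations of the same computation --- and your explicit verification that $K$ lies in the nullspace is a detail the paper only asserts.
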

\begin{proof}
    It is obvious that the left nullspace of $B$ has dimension $2^{3|L| - \operatorname{rk}(B)}$. However, in our formulation, every $3^{rd}$ row in $B$ corresponds to a combination of the previous two rows. This is equivalent to saying that in a check, having Paulis $(X, w), (Z, w)$ is equivalent to having only Pauli $(Y, w)$. Removing all the $Y$ rows in $B$ still yields a valid matrix $B'$. This time, any vector $x'$ such that $x' B' = 0$ uniquely defines a valid check. The size of its left nullspace is exactly $2^{2|L| - \operatorname{rk}(B)}$.
\end{proof}

This Lemma trivially generalizes in the presence of a non-trivial stabilizer group $S$, in which case the number of checks becomes $2^{2|L| - \operatorname{rk}(BN)}$.
In a random payload, when considering the full set of valid checks, we expect the number of valid checks to grow exponentially with the size of the support $L$. Figure~\ref{fig:scaling_check_dim} shows numerical evidences of this scaling.

\begin{figure}[h]
    \centering
    \includegraphics[width=0.45\linewidth]{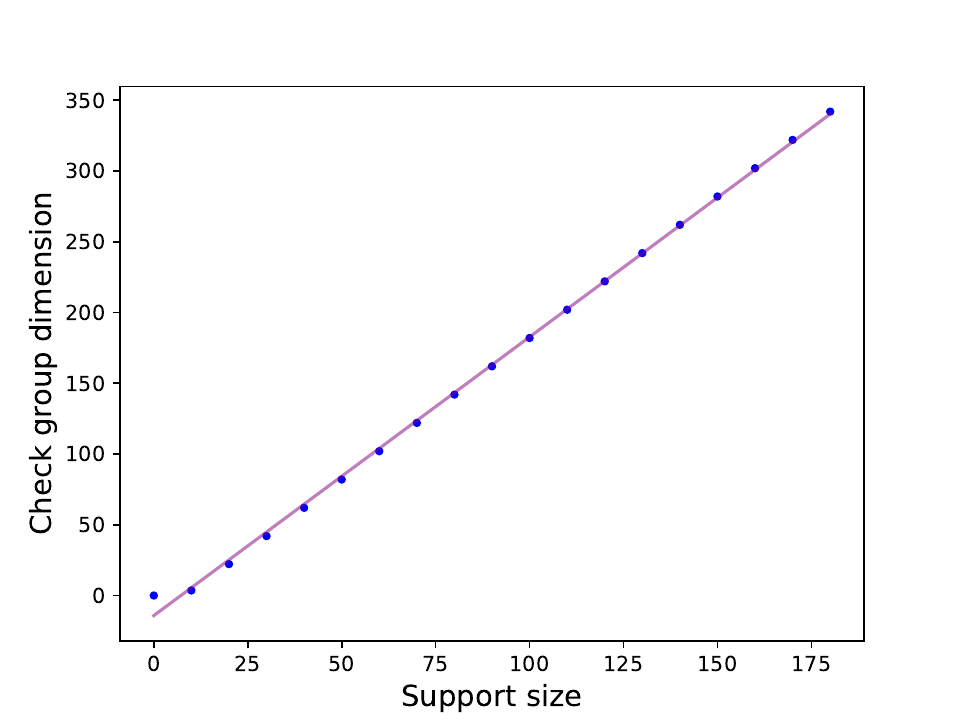}
    \caption{\textbf{The size of the group of valid checks.} Scaling of the check group's dimension as a function of the check's support size. In this example, the payload is a random brickwork Clifford circuit over a path of 20 qubits and of depth $40$. The support is grown by iteratively adding random wires to it. Each point is averaged over 10 runs. The fit is near perfect, supporting the assertion that the check group size grows exponentially with increased support size.}
    \label{fig:scaling_check_dim}
\end{figure}

 The previous remark also gives a means to uniformly sample valid checks. One can simply compute a set of generators of the left nullspace of the appropriate matrix (removing the $Y$ rows of $B$), and pick random combinations of those generators.

\section{Extending spacetime checks to Clifford + Pauli rotation circuits}
While coherent Pauli checks were first designed to detect errors within a pure Clifford payload, it is possible to extend them to a universal class of circuits, namely, circuits composed of Clifford gates and non-Clifford single-qubit Pauli rotations. In \cite{gonzales_quantum_2023}, the authors propose that a left/right check can be used in the presence of some non-Clifford rotations provided that the check can be commuted through. However, their trial-and-error solution to finding checks suffers from an exponential slowdown with increasing number of rotations in the circuit.

In this section we show that the framework of spacetime Pauli checks provides us with efficient tools to fully characterize the set of valid checks in the presence of non-Clifford rotations, by using the fact that additional validity constraints are linear and can be enforced in our decoding formalism. This keeps check picking efficient for universal circuits as well, despite the shrinking space of valid checks. Furthermore, we show that the space of checks can be fully restored at the expense of some overhead during the implementation of those checks that do not satisfy the new constraints.

\subsection{Validity of a check in the presence of rotations}

Before we present this generalization, consider a Clifford payload and a single valid check $\mathcal{C} = \{(P_1, w_1), ..., (P_k, w_k)\}$ (in the sense of Eq.~\ref{eq:reduces_to_i}). It is straightforward to notice that for any bi-partition $\mathcal{L}\subseteq\mathcal{C}, \mathcal{C}\setminus \mathcal{L}$ of $\mathcal{C}$, we have:
$$\prod_{(P, w)\in \mathcal{L}} B(P_i, w_i) = \prod_{(P, w)\in \mathcal{C}\setminus \mathcal{L}} B(P_i, w_i)$$
Notice also that this equation holds at every space-like in the circuit: one can pick any set of independent wires in the circuit, propagate all Paulis onto those wires, and express the same equality.

Consider such a circuit with a unique non-Clifford single-qubit rotation.
This circuit can be seen as a Clifford circuit with a single-qubit rotation $R_P(\theta)$ located on some internal wire $w_r$, and consider the circuits $A_{w_r}$ and $B_{w_r}$ as defined in Figure~\ref{fig:backprop}.
Our payload can be written as $B_{w_r}\cdot  R_{P_{w_r}}(\theta)\cdot A_{w_r}$. This bi-partition of the circuit induces a natural bi-partition of any check $\mathcal{C}$ into two half-checks $\mathcal{L}= \{(P, w) \in \mathcal{C}, w \leq w_r \}$ and $\mathcal{R}= \{(P, w) \in \mathcal{C}, w > w_r\}$. Those two half-checks, when injected in the circuit, will effectively change $A_{w_r}$ and $B_{w_r}$ into $QA_{w_r}$ and $B_{w_r} Q$ for some operator $Q$ obtained by pulling either half of the check to any spacelike containing $w_r$. Figure~\ref{fig:clifford_t_example_single} depicts a circuit equivalent to the one we would get when actually implementing the check.

\begin{figure}[h]
    \centering
    \begin{tikzpicture}
        \draw[ibmcyan, thick] (-0.2, -0.3) -- (6.4, -0.3);
        \draw[ibmcyan, thick] (-0.2, -0.8) -- (6.4, -0.8);
        \draw[ibmcyan, thick] (-0.2, -1.5) -- (6.4, -1.5);
        \draw[ibmcyan, thick] (-0.2, -2.3) -- (6.4, -2.3);
        \draw[ibmcyan, thick] (-0.2, -2.8) -- (6.4, -2.8);
        \draw[ibmpurple, thick] (-0.2, -3.4) -- (6.4, -3.4);
        \draw (-0.7, -3.4) node[ibmpurple] {$\ket{+}$};
        \draw (1.8, -3.4) node[ibmpurple, fill=ibmpurple, circle, inner sep=2pt] {};
        \draw (4.4, -3.4) node[ibmpurple, fill=ibmpurple, circle, inner sep=2pt] {};
        \draw[thick, ibmpurple] (4.4, -3.4) -- (4.4, - 3);
        \draw[thick, ibmpurple] (1.8, -3.4) -- (1.8, - 3);
        \draw[fill=ibmcyan, rounded corners, ibmcyan] (0, 0) rectangle node[white]{$A_{w_r}$} (1, -3);
        \draw[fill=ibmpurple, rounded corners, ibmpurple] (1.3, 0) rectangle node[white]{$Q$} (2.3, -3);
        \draw[fill=ibmteal, rounded corners, ibmteal] (2.6, -1 - 0.15) rectangle node[white]{$R_{P}(\theta)$} (3.6, -2 + 0.15);
        \draw[fill=ibmpurple, rounded corners, ibmpurple] (3.9, 0) rectangle node[white]{$Q$} (4.9, -3);
        \draw[fill=ibmcyan, rounded corners, ibmcyan] (5.2, 0) rectangle node[white]{$B_{w_r}$} (6.2, -3);
    \end{tikzpicture}
    \caption{\textbf{Picking checks in the presence of non-Clifford rotations.} Any check in a payload composed of Clifford gates and a single non-Clifford rotation can be put into this form (in the case of a trivial stabilizer group).}
    \label{fig:clifford_t_example_single}
\end{figure}
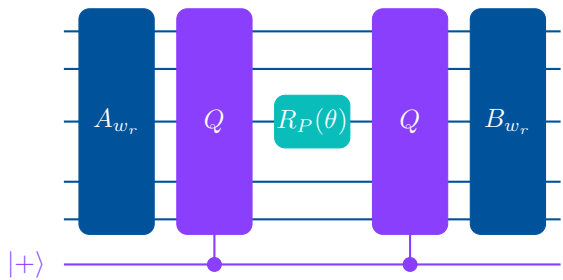

Now, it is easy to notice that if $[P, Q] \neq 0$, then the resulting circuit is not equivalent to the original one, and the ancilla qubit becomes entangled with the payload qubit, failing to create an invariant that can be efficiently checked. In other words, the notion of a valid check is now more constrained: each rotation present in the circuit will induce new constraints on the checks.

Luckily, it is possible to modify the formulation of Problem 7 to take these additional constraints into account. Indeed, consider $R$, the $3|L| \times m$ table where $m$ is the number of rotations
such that the row indexed by $(P, w)$ has a $1$ in column $(Q, x)$ if and only if :
    \begin{itemize}
        \item $[B(P, w), B(Q, x)] \neq 0$
        \item $x < w$
    \end{itemize}
In other words, $R$ summarizes all the commutation relations between all rotations and any Pauli that would potentially belong to our check, coming \textit{after} the rotation. 
Checking that a check, corresponding to a set of rows of $R$, commutes with a given rotation, corresponding to a column of $R$, now boils down to considering the corresponding entries in $R$ and checking that they sum to $0$ mod 2 (see Figure~\ref{fig:build_matrix_c} for a detailed example).

\begin{figure}[h]
    \centering
    \begin{tikzpicture}[baseline=(anchor)]
    \coordinate (anchor) at (0, -1);
        \draw[ibmcyan, thick] (-0.2, 0) -- (3.6 + 1.5, 0);
        \draw[ibmcyan, thick](-0.2, -1) -- (3.6+ 1.5, -1);
        \draw (0.5, 0) node[ibmcyan, fill=ibmcyan, circle, inner sep=2pt] {};
        \draw[ibmcyan, thick] (0.5, -1) node[circle, inner sep=0.15cm, draw]{};
        \draw[ibmcyan, thick] (0.5, 0) -- (0.5, -1 - 0.23);

        \draw[fill=ibmteal, rounded corners, ibmteal] (1, 0.35) rectangle node[white]{$R_{X}(\theta_1)$} (2.2, -0.35);
        \draw[fill=ibmteal, rounded corners, ibmteal] (1, 0.35 - 1) rectangle node[white]{$R_{Z}(\theta_2)$} (2.2, -1.35);

        \draw[fill=ibmcyan, rounded corners, ibmcyan] (2.5, 0.35 - 1) rectangle node[white]{$\sqrt{X}$} (3.7, -1.35);
        \draw (2.7+ 1.5, 0) node[ibmcyan, fill=ibmcyan, circle, inner sep=2pt] {};
        \draw[ibmcyan, thick] (2.7+ 1.5, -1) node[circle, inner sep=0.15cm, draw]{};
        \draw[ibmcyan, thick] (2.7+ 1.5, 0) -- (2.7+ 1.5, -1 - 0.23);

        \draw (-0.1, 0.3) node {$w_1$};

        \draw (3.1+ 1.5, 0.3) node {$w_2$};
        \draw (3.1+ 1.5, 0.3 - 1) node {$w_3$};

        \draw (1.6, 0.5) node {$w_a$};
        \draw (1.6, -1.5) node {$w_b$};
    \end{tikzpicture}~~~~~
        \begin{tabular}{cc|c|c}
             & & $(Z, w_a)$ & $(X, w_b)$ \\
            \hline \multirow{3}{0.7em}{$w_1$} & $X$ & - & -\\
                                            & $Y$ & - & -\\
                                            & $Z$ & - & -\\
            \hline \multirow{3}{0.7em}{$w_2$} & $X$ & 0 & 1 \\
                                            & $Y$ & 1 & 1 \\
                                            & $Z$ & 1 & 0 \\
            \hline \multirow{3}{0.7em}{$w_3$} & $X$ & 0 & 1\\
                                            & $Y$ & 1 & 0\\
                                            & $Z$ & 1 & 1\\
        \end{tabular}
    \caption{\textbf{Example of extra constraints on checks due to rotations.} In this example, $L = \{w_1, w_2, w_3\}$. Since $w_1$ comes before $w_a$ and $w_b$, the corresponding table entries are set to $0$ (here $-$ for clarity). From the constructed table, we can observe that picking the half-check $(Y, w_2), (Z, w_3)$ would be a valid choice. Indeed, an operator $YZ$ at the end of the payload will propagate as a $XZ$ operator on wires $(w_a, w_b)$, which commutes with both rotations. On the other side, picking check $(Y, w_2), (X, w_3)$ is an invalid choice. The operator would propagate as $Y$ on $w_a$ which anti-commutes with the first rotation. }
    \label{fig:build_matrix_c}
\end{figure}
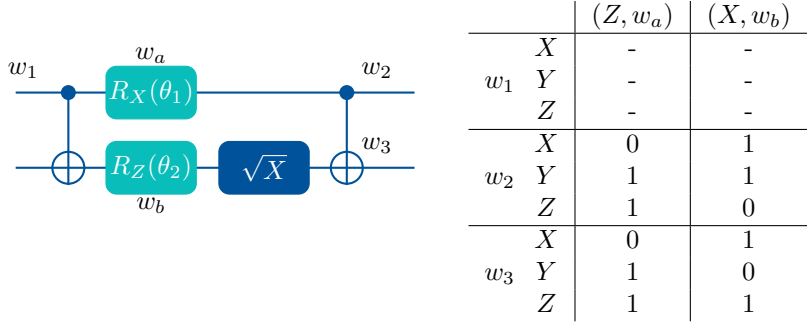

Once $R$ is constructed, one can modify Problem 4 as follows:
\medskip

\noindent\textbf{Problem 9: Syndrome decoding skipping rotations}
\smallskip

\noindent\textbf{Input:} A parity check matrix $B$, a matrix $R$
\smallskip

\noindent\textbf{Output:} 
\begin{equation}
\begin{aligned}
\min_{x, y} \quad & |x|\\
\textrm{s.t.} \quad & x \left(B\ R\right) = \left(q\ r\right)
\end{aligned}
\end{equation}
\medskip

where $\left(B\ R\right)$ is the horizontal stacking of $B$ and $R$, and $r$ is the anti-commutation vector obtained for the fixed portion of the check. Of course, all this construction can be generalized in the case where $S$ is non-trivial, leading to the more general formulation:
\medskip

\noindent\textbf{Problem 10: Syndrome decoding up to subspace, skipping rotations}
\smallskip

\noindent\textbf{Input:} A parity check matrix $B$, a matrix $R$, and a matrix $T$ of nullspace $N$
\smallskip

\noindent\textbf{Output:} 
\begin{equation}
\begin{aligned}
\min_{x, y} \quad & |x|\\
\textrm{s.t.} \quad & x \left(BN\ R\right) = \left(qN\ r\right)
\end{aligned}
\end{equation}
\medskip

Any solution to this problem would be a valid check that commutes with all rotations present in the circuit. Even though this approach is mathematically sound, it is unlikely to scale well in practice since the set of valid checks would shrink dramatically when adding new rotations.
To illustrate this behavior, we ran simulations of near-Clifford payloads with an increasing number of non-Clifford rotations.
Simulation is performed by propagating Pauli errors through the payload and checking if the error commutes with the injected rotations. Runs are considered faulty if the accumulated error anti-commutes with any rotation in the circuit or if the final error acts non-trivially on the payload qubits. Results are depicted in Figure~\ref{fig:non_clifford_simulation}.

\begin{figure}
    \centering
    \includegraphics[width=0.85\linewidth]{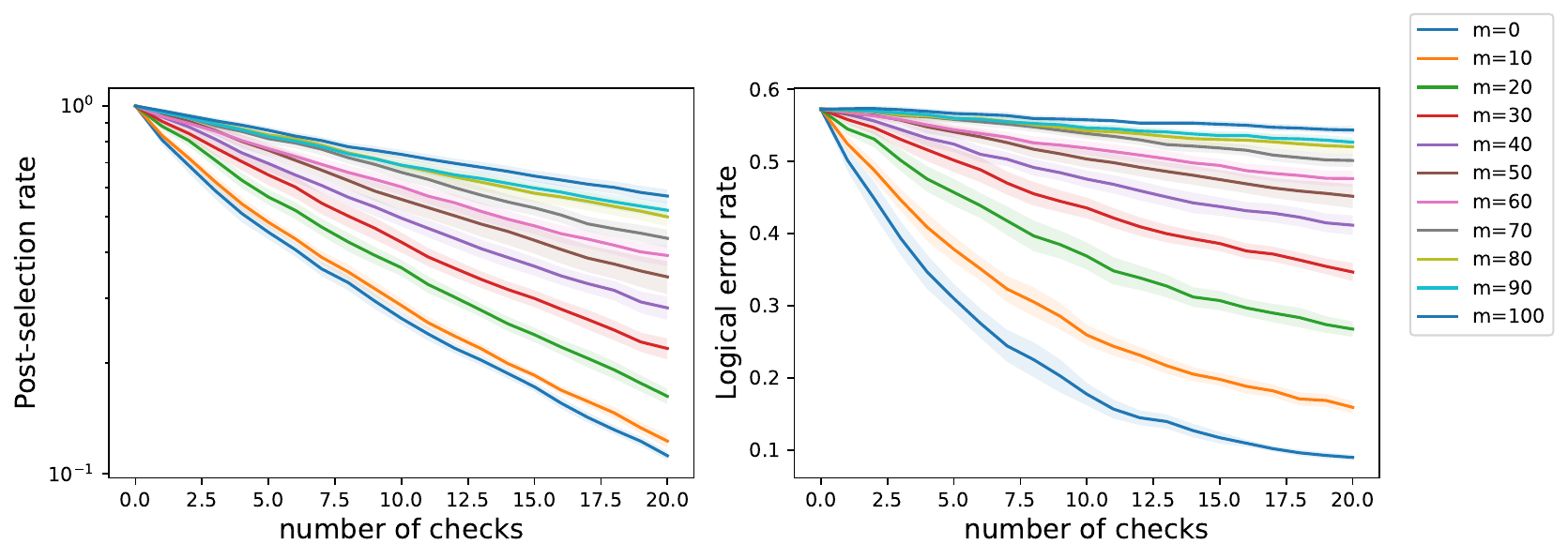} \includegraphics[width=0.45\linewidth]{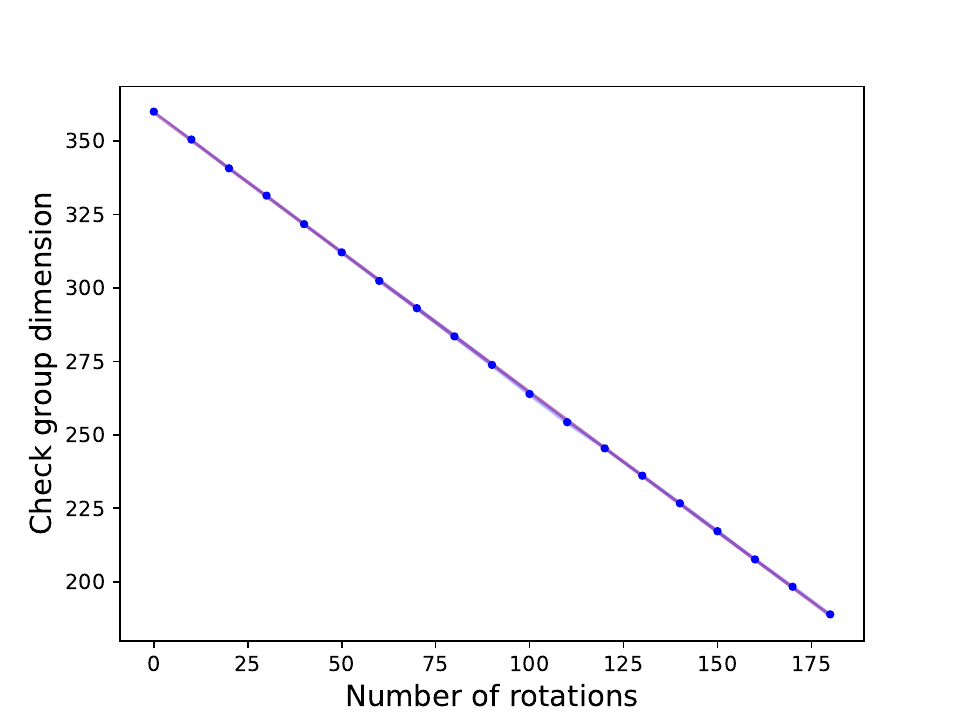}
    \caption{\textbf{Effect of increasing non-Clifford rotations on the performance of checks.} Payloads are generated by constructing a random brickwork Clifford circuits of depth 40 over 20 qubits and injecting $T$ gates on random wires.  (Top) We report post-selection rate and logical error rate as a function of the number of checks in the presence of a variable number of rotations ($m$). Each point is averaged over 20 runs with different rotation injected in the same initial circuit. Shaded regions correspond to $95\%$ confidence interval. Each check probes a single data qubit. The noise model consists of single-qubit uniform Pauli channels with rate $0.1\%$ attached to all wires directly following $2$-qubits gates. Data is presented for various rotation counts ($m$) ranging from $0$ to $100$. (Bottom) Scaling of the check group dimension as a function of the number of rotations. In this setting, rotations are injected at random positions. Each point is averaged over $10$ different rotation emplacements and axes. As expected, the number of valid checks decreases exponentially with the number of rotations.}
    \label{fig:non_clifford_simulation}
\end{figure}

Our decoding method remains applicable to Problem 10, and all the matrices used can be constructed in polynomial time from the input circuit's description. This approach offers an efficient heuristic for identifying checks in the presence of rotations, unlike the trial-and-error method described in \cite{gonzales_quantum_2023}, which has a runtime that scales exponentially with the number of rotations.

It is straightforward to notice that any check that is compatible with some rotation of axis $P$ on wire $w$ cannot detect an error $P$ on that wire. This is due to the fact that the back-cumulant of the check will commute with $P_w$. This can seem like a reasonable drawback until considering practical use cases. When considering Clifford + T circuits for instance, noisy $T$ gates can be twirled, yielding a diagonal Pauli channel \cite{PhysRevLett.127.200505}. This entails that any error due to the $T$ gates remains undetected by valid checks, which would be detrimental in regimes where Clifford gates have a lower error rate than $T$ gates. 

 We can extend the results of Section \ref{subsec:group_size} in the presence of non-Clifford rotations. In this setting, the number of checks becomes $2^{2|L| - \operatorname{rk}(B N\ R)}$.
In a random payload, when considering the full set of valid checks,  we expect the number of valid checks to decrease exponentially with the number of rotations.
Figure~\ref{fig:non_clifford_simulation} (Bottom) shows numerical evidence of these scalings.

The fact that the check group size decreases exponentially suggests that we need to remove such constraints on valid checks. In the next section we show how this can be done at the expense of some extra gate overhead. Even though, this approach will let us recover the full set of checks, those checks won't be able to catch errors due to the rotations.

\subsection{Fixing invalid checks in the presence of rotations}
In our framework, we have the freedom of picking any Pauli located in the target support $L$ to implement our checks.
We can exploit this freedom to relax the constraints described in the previous section, and consider a check that might not commute with some rotation of axis $P$ on wire $w$ when $w\in L$.

Indeed, consider some check $\mathcal{C}$ that anti-commutes with some rotation $R_{P_w}(\theta)$. Now consider any Pauli $V \in \{X, Y, Z\}$ such that $[V, P] \neq 0$. We claim that the check $\mathcal{C} \cup \{(V, w-), (V, w+)\}$, where $w\pm$ denotes the wires right before and after the rotation in the full circuit, is a compatible check.
Indeed, consider the half-checks $\mathcal{R}$ and $\mathcal{C}\setminus \mathcal{R}$ as described in the previous section. When propagating $\mathcal{R}$ (resp. $\mathcal{C}\setminus \mathcal{R}$) right before (resp. after) the rotation, they account for the same operator $Q$ such that $[Q, P_w]\neq 0$, since $\mathcal{C}$ is not compatible with the rotation. When adding $(V, w-)$ and $(V, w+)$ to the check, $Q$ becomes $Q\cdot V_w$ and we have $[Q\cdot  V_w, P_w] = 0$, hence the compatibility of the new check. Moreover, this new check is obviously valid in the sense of Eq.~\ref{eq:reduces_to_i} (or Eq.~\ref{eq:reduces_to_s} in the presence of a non-trivial $S$).

In practice, this means that one can remove some columns of $R$ when solving the decoding problem. Namely, any column corresponding to some rotation $(Q, x)$ where $x\in L$ can be safely removed. The resulting check might be incompatible with the corresponding rotations. We will then add in the correct $(V, w)$ Paulis to the check to make it compatible with those rotations (see Figure~\ref{fig:clifford_t_fixing}). Note that the newly added controlled-V gates are Cliffords. Therefore, we do not add new rotations to the circuit, which can be useful in settings where noise is predominantly due to rotations (e.g. $T$ gates) and Cliffords are less noisy.

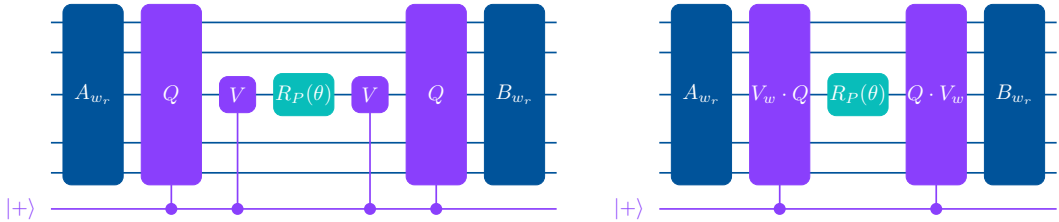
\begin{figure}[h]
    \centering
    \scalebox{0.8}{\begin{tikzpicture}
        \draw[ibmcyan, thick] (-0.2, -0.3) -- (8.2, -0.3);
        \draw[ibmcyan, thick] (-0.2, -0.8) -- (8.2, -0.8);
        \draw[ibmcyan, thick] (-0.2, -1.5) -- (8.2, -1.5);
        \draw[ibmcyan, thick] (-0.2, -2.3) -- (8.2, -2.3);
        \draw[ibmcyan, thick] (-0.2, -2.8) -- (8.2, -2.8);
        \draw[ibmpurple, thick] (-0.2, -3.4) -- (8.2, -3.4);
        \draw (-0.7, -3.4) node[ibmpurple] {$\ket{+}$};
        \draw (1.8, -3.4) node[ibmpurple, fill=ibmpurple, circle, inner sep=2pt] {};
        \draw[thick, ibmpurple] (1.8, -3.4) -- (1.8, - 3);
        \draw (6.2, -3.4) node[ibmpurple, fill=ibmpurple, circle, inner sep=2pt] {};
        \draw[thick, ibmpurple] (6.2, -3.4) -- (6.2, - 3);
        \draw (2.9, -3.4) node[ibmpurple, fill=ibmpurple, circle, inner sep=2pt] {};
        \draw[thick, ibmpurple] (2.9, -3.4) -- (2.9, -2+0.2);
        \draw (5.1, -3.4) node[ibmpurple, fill=ibmpurple, circle, inner sep=2pt] {};
        \draw[thick, ibmpurple] (5.1, -3.4) -- (5.1, - 2 +0.2);
        \draw[fill=ibmcyan, rounded corners, ibmcyan] (0, 0) rectangle node[white]{$A_{w_r}$} (1, -3);
        \draw[fill=ibmpurple, rounded corners, ibmpurple] (1.3, 0) rectangle node[white]{$Q$} (2.3, -3);
        \draw[fill=ibmpurple, rounded corners, ibmpurple] (2.6, -1 - 0.2) rectangle node[white]{$V$} (3.2, -2+0.2);
        \draw[fill=ibmteal, rounded corners, ibmteal] (3.5, -1-0.15) rectangle node[white]{$R_{P}(\theta)$} (4.5, -2+0.15);
        \draw[fill=ibmpurple, rounded corners, ibmpurple] (4.8, -1-0.2) rectangle node[white]{$V$} (5.4, -2+0.2);
        \draw[fill=ibmpurple, rounded corners, ibmpurple] (5.7, 0) rectangle node[white]{$Q$} (6.7, -3);
        \draw[fill=ibmcyan, rounded corners, ibmcyan] (7, 0) rectangle node[white]{$B_{w_r}$} (8, -3);
    \end{tikzpicture}~~~~~~~\begin{tikzpicture}
        \draw[ibmcyan, thick] (-0.2, -0.3) -- (6.4, -0.3);
        \draw[ibmcyan, thick] (-0.2, -0.8) -- (6.4, -0.8);
        \draw[ibmcyan, thick] (-0.2, -1.5) -- (6.4, -1.5);
        \draw[ibmcyan, thick] (-0.2, -2.3) -- (6.4, -2.3);
        \draw[ibmcyan, thick] (-0.2, -2.8) -- (6.4, -2.8);
        \draw[ibmpurple, thick] (-0.2, -3.4) -- (6.4, -3.4);
        \draw (-0.7, -3.4) node[ibmpurple] {$\ket{+}$};
        \draw (1.8, -3.4) node[ibmpurple, fill=ibmpurple, circle, inner sep=2pt] {};
        \draw (4.4, -3.4) node[ibmpurple, fill=ibmpurple, circle, inner sep=2pt] {};
        \draw[thick, ibmpurple] (4.4, -3.4) -- (4.4, - 3);
        \draw[thick, ibmpurple] (1.8, -3.4) -- (1.8, - 3);
        \draw[fill=ibmcyan, rounded corners, ibmcyan] (0, 0) rectangle node[white]{$A_{w_r}$} (1, -3);
        \draw[fill=ibmpurple, rounded corners, ibmpurple] (1.3, 0) rectangle node[white]{$V_w\cdot Q$} (2.3, -3);
        \draw[fill=ibmteal, rounded corners, ibmteal] (2.6, -1 - 0.15) rectangle node[white]{$R_{P}(\theta)$} (3.6, -2 + 0.15);
        \draw[fill=ibmpurple, rounded corners, ibmpurple] (3.9, 0) rectangle node[white]{$Q\cdot V_w$} (4.9, -3);
        \draw[fill=ibmcyan, rounded corners, ibmcyan] (5.2, 0) rectangle node[white]{$B_{w_r}$} (6.2, -3);
    \end{tikzpicture}}
    \caption{Any incompatible check can be fixed into a valid compatible check at the cost of two additional CV gates where $[V_w, P_w] \neq 0$. The circuit can be brought into the form depicted here (left). This circuit is equivalent to the circuit on the right. Since $[Q, P_w] \neq 0$ and $[V_w, P_w]\neq 0$, we have that $[V_w \cdot Q, P_w] = 0$}
    \label{fig:clifford_t_fixing}
\end{figure}









\section{Picking multiple checks around a payload} \label{sec:multicheck}

In this section we outline our strategy for selecting good checks among many valid checks, and how we efficiently implement them in hardware-constrained environments. We describe how to associate ancilla qubits with data qubits to perform checks and discuss the process of iteratively picking and scoring checks to optimize error detection while managing noise and routing overhead.

Figure~\ref{fig:layout_example} shows an example layout of data and check qubits. The payload occupies some physical region of the hardware, while the checks probe it from the periphery to protect it from errors.
Given a Clifford circuit transpiled onto an arbitrary subgraph of the connectivity graph, we find data qubits (green) that are adjacent to a path of ancilla qubits (purple). Each ancilla in the path implements a check on the corresponding data qubit's wires. Since only one ancilla in the path has direct access to the data qubit, further ancilla are swapped closer when needed.

\begin{figure}[h]
    \centering
    \includegraphics[width=0.5\linewidth]{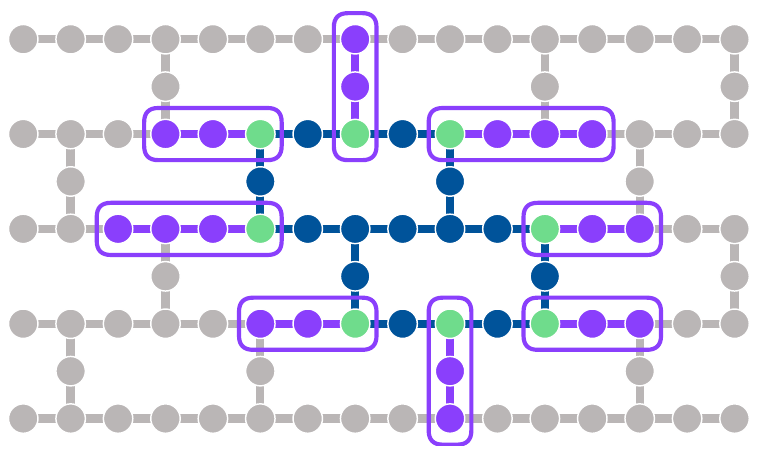}
    \caption{\textbf{Example physical layout of qubits}. The payload is supported on some contiguous set of physical qubits (green and blue), and free qubits nearby can be used as check ancilla. In our experiments we use a path of ancilla (purple), and restrict them to measure Paulis with support only on their nearest neighbor (green) to minimize their overhead, but this can also be relaxed to a short-range neighborhood.
    Half-SWAPS are inserted so that each qubit in the purple paths can be moved next to their corresponding green data qubit in order to implement the checks.
    }
    \label{fig:layout_example}
\end{figure}

\paragraph{Picking good check support.}
Within a path of ancilla qubits, each check qubit implements a Pauli check on the data qubit's wires. Since the group of valid checks is exponentially large, we need a way to restrict the search to non-trivial checks whose back-cumulants have a good chance of propagating far inside the payload. To force this behavior, we adopt the strategy of forcing some random Paulis on two (time) distant locations and use our decoding algorithm to complete this partial check into a valid check. Moreover, since we will pick several checks on (roughly) the same support, we want the check to be grouped on a restricted window of the support. Having a check supported on the full spread of the support might lead to a large half-SWAP overhead (see below).
We consider wires following 2-qubit gates, temporally sorting them to form the check's support. We pick a window size within \( 10\% \) to \( 30\% \) of the total support size and a random window position, giving us a window of wires \( [w_1, \ldots, w_k] \). We enumerate all pairs of Pauli operators \( (P_1, P_2) \in \{X, Y, Z\}^{\otimes 2} \), forcing values \( P_1 \) for \( w_1 \) and \( P_2 \) for \( w_k \), and use the decoding algorithm to find valid checks under these constraints. This process is repeated 15 times, leading to up to \( 135 \) valid checks. Each check is scored using the technique in Section~\ref{sec:coverage}, and the best check is selected. 

\paragraph{Picking multiple checks along a path.}

For a given path of check qubits, we iteratively pick checks. After finding the first best check, we commit it to the payload and proceed with the next checks. For the second check qubit, we first find a collection of checks and insert half-SWAP gates (see Figure~\ref{fig:half_swaps}) before scoring to account for additional noise. If a check's score decreases compared to the prior check, we abort check picking for that path.

\paragraph{Routing checks.}

\begin{figure}[h]
    \centering
\[ \begin{array}{ccc} \Qcircuit @C=1.0em @R=1.2em @!R { 
      & \qswap & \ctrl{1} & \qw\\
      & \qswap \qwx & \control\qw & \qw
 }  &\Qcircuit @C=1.0em @R=0.8em @!R { 
      & \\
      &=
 }&~~~\Qcircuit @C=1.0em @R=.2em @!R { 
      & \gate{H} & \ctrl{1} & \targ & \qw& \qw\\
      & \qw & \targ & \ctrl{-1} &\gate{H}&\qw
 } \end{array}\]

    \caption{\textbf{The half-SWAP identity used in check routing}. This simultaneously reduces the gate count and improves check coverage (see Figure~\ref{fig:half_swaps_vs_swaps}).}
    \label{fig:half_swaps}
\end{figure}
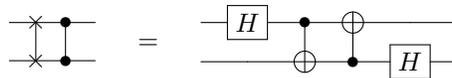

\begin{figure}[h]
    \centering
    \scalebox{0.7}{\input{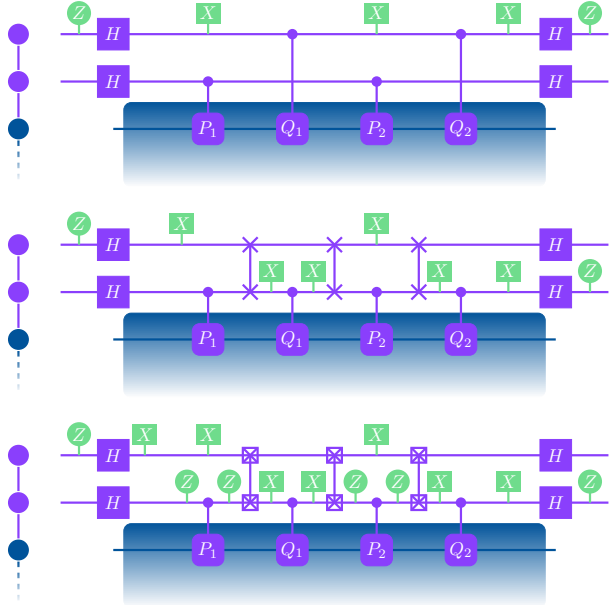}}
    \caption{\textbf{Two checks probing the same data qubit}. (Top) The two check are implemented, but the circuit is still incompatible with the hardware's connectivity (depicted on the left). The check back-cumulant propagates as a Pauli $X$ on the check qubit's wires, detecting $Z$ and $Y$ errors at those positions (here we left out the propagation of the cumulant inside the payload). (Middle) After routing with SWAP gates, the circuit is now compatible with the qubit's connectivity. The back-cumulant propagates similarly. The routing overhead is of $9$ entangling gates. (Bottom) When using half-SWAPs instead of SWAPs, we need to insert an additional CZ gate to have a total of $4$ equivalent CZ gates acting between the two ancilla. The gate overhead drops to $7$ entangling gates. Moreover, the back-cumulant now partially spreads to the other check qubit's wires. In this example, we can observe that $4$ additional wires are covered by Pauli $Z$s. This check will also detect $X$ and $Y$ errors on those $4$ additional wires. Notice that on those wires, the other check's cumulant will propagate an $X$ operator, thus catching all possible errors.}
    \label{fig:half_swaps_vs_swaps}
\end{figure}

In our protocol, we are picking a check per available ancilla qubit along a path. Among all those ancilla qubits, only the first one is actually able to interact with the data qubits. To actually implement the full set of checks, we need to move the other ancillae into the first position using SWAP gates whenever they need to interact with the payload, incurring an additional gate overhead. 
This is done greedily by swapping the next ancilla that needs to interact with the data qubits to the first position in the path.

One can notice that given two check qubits implementing some separate checks on the payload, we can insert CZ gates between those ancilla qubits at any point during the check implementations (between the $H$ gates). As long as we insert an even number of them, they will commute with the check's controlled-Pauli gates and cancel out. We can use this trick to implement cheaper SWAPs consisting of a SWAP followed by a CZ gate. This circuit can be simplified into a cost 2 circuit (see Figure~\ref{fig:half_swaps}).
Overall, we make sure that each pair of ancilla along the path was subject to an even number of SWAPs, effectively ensuring that the additional CZ gates canceled out. In practice, we never use paths longer than $3$ in our experiments and are able to generate the optimal circuits implementing the $8$ possible different corrections.

Importantly, in addition to reducing the overall entangling gate count, this half-SWAP approach also has the advantage of propagating $X$ errors from one check to the others, leading to improved coverage (see Figure~\ref{fig:half_swaps_vs_swaps}).

\section{Stabilizer states, entanglement width, and fidelity estimation} \label{sec:stabilizer-state}

The main experiments we report in this work are the preparation of highly-entangled stabilizer states and measurement of their fidelity under error detection. This section formalizes the notion of entanglement width, a measure of the entanglement present in a quantum state. We discuss how stabilizer states can be characterized by their entanglement width and discuss how their experimental fidelity can be estimated using Monte Carlo state certification.

\subsection{Entanglement of a quantum state}

Entanglement is typically defined for bipartite systems. For a quantum state \( \ket{\psi} \) over two disjoint sets \( A \) and \( B \) of qubits, it can be expressed as \( \ket{\psi} = \sum_{i=1}^m \alpha_i \ket{\psi_i}_A \otimes \ket{\phi_i}_B \), where \( \ket{\psi_i} \) and \( \ket{\phi_i} \) are orthonormal families of states, and \( \alpha_i \) are real positive numbers. The bipartite entanglement entropy is defined as \( E_{A, B}(\ket{\psi}) = \sum_{i=1}^m -\alpha_i^2 \log (\alpha_i^2) \).

While this quantifies entanglement across a bisection, it does not capture the global entanglement in the system. The \textit{entanglement width}, introduced in \cite{PhysRevLett.97.150504}, is one way of characterizing the flat amount of entanglement present in a system.

\paragraph{Branch decomposition and entanglement width.}

A subcubic tree is a tree of degree at most 3. A branch decomposition of a finite set \( V \) is a pair \( (T, L) \) where \( T \) is a subcubic tree and \( L \) is a bijection between the leaves of \( T \) and \( V \). Removing any edge \( e \) of \( T \) splits \( T \) into two connected components, inducing a bipartition of the leaves of $T$, and, thanks to $L$, a bipartition $A^e, B^e$ of  \( V \). Given a state \( \ket{\psi} \) over \( V \) and a branch decomposition \( (T, L) \), we associate each edge \( e \) of \( T \) with the number \( E_{A^e, B^e} \). The width of \( e \) is \( E_{A^e, B^e} \), and the width of \( T \) is the maximal width of its edges. The entanglement width of \( \ket{\psi} \) is the minimum width over all branch decompositions of \( V \):

\[ ew(\ket{\psi}) = \min_T \max_{e \in T} E_{A^e, B^e}(\ket{\psi}) \]

This metric can't be ``cheated'' in the sense that \textit{any} balanced bipartition $A/B, |A|, |B| \geq \left\lceil \frac{n}{3}\right\rceil$ of the qubits of $\ket{\psi}$ will verify \( E_{A, B}(\ket{\psi}) \geq ew(\ket{\psi}) \).
It is also straightforward to notice that, due to the subcubic structure of the decomposition, any quantum state over \( n \) qubits has an entanglement width bounded by \( \lceil n/3 \rceil \).

\paragraph{Stabilizer states and entanglement width.}

Stabilizer states can be expressed as graph states transformed via local Clifford gates \cite{PhysRevA.69.022316}. Graph states are stabilizer states described by undirected graphs. Given a graph \( G \), the graph state \( \ket{G} \) is the unique state stabilized by \( \langle X_u \prod_{v \in \mathcal{N}_G(u)} Z_v, u \in V(G) \rangle \).

It happens that there exists some graph parameter, called \textit{rank-width} that coincides exactly with the entanglement width of graph-states, in the sense that for any graph $G$:

\[ ew(\ket{G}) = rw(G) \]

Rank-width is defined similarly to entanglement width but uses the rank over \( \mathbb{F}_2 \) of the bipartite adjacency matrix induced by a bipartition of its vertices. This equality derives from the fact that cut-rank function and the bipartite entanglement entropy of graph states coincide exactly \cite{hein2006entanglementgraphstatesapplications}.

For instance, a GHZ state over \( n \) qubits is locally equivalent to the graph state $\ket{K_n}$ (the complete graph over $n$ vertices) which has \( rw(K_n) = 1 \). In contrast, a random graph state has entanglement width close to \( \lceil n/3 \rceil \) \cite{rank_width_random}.

In our experiments we used random hardware-efficient Clifford circuits with depth \( 2n \), producing random stabilizer states.
To find graph states locally equivalent to our stabilizer states, we use a technique described in \cite{Aaronson_2004, Goubault_de_Brugi_re_2025} consisting of finding a set of $H$ gates turning the $X$ part of the stabilizers into a full rank sub-matrix. The $X$ is then diagonalized, effectively symmetrizing the $Z$ part. At this stage the $Z$ corresponds to the adjacency matrix of the locally equivalent graph state. We then used a rank-width lower bound algorithm \cite{rank_width_lower_bound} to compute entanglement width lower bounds for those states. Results show that all considered stabilizer states have rank-width close to the optimal width. This result is not surprising, considering the fact that depth $2n+2$ on a linear nearest neighbor architecture is enough to be able to produce \textit{any} graph state~\cite{maslov2018shorter}.

\begin{table}[h]
    \centering
    \begin{tabular}{c|c|c}
        n & $ew$ lower bound & $ew$ upper bound (\(\lceil n / 3 \rceil\)) \\
        \hline
        14 & 4 & 5\\
        20 & 5 & 6\\
        26 & 7 & 9 \\
        32 & 9 & 11\\
        38 & $10^\star$ & 13\\
        44 & $9^\star$ & 15\\
        50 & $8^\star$ & 17
    \end{tabular}
    \caption{\textbf{Entanglement bounds for the stabilizer states used in this work.} For each stabilizer state in our experiments we find a lower bound for the entanglement width of the state by calculating a rank-width lower bound for the corresponding graph state. Numbers flagged with~$^\star$ indicate that the lower bound estimation algorithm timed out after 24 hours.}
    \label{tab:rank-width-lw}
\end{table}

\begin{figure}[h]
    \centering
    \includegraphics[scale=.5]{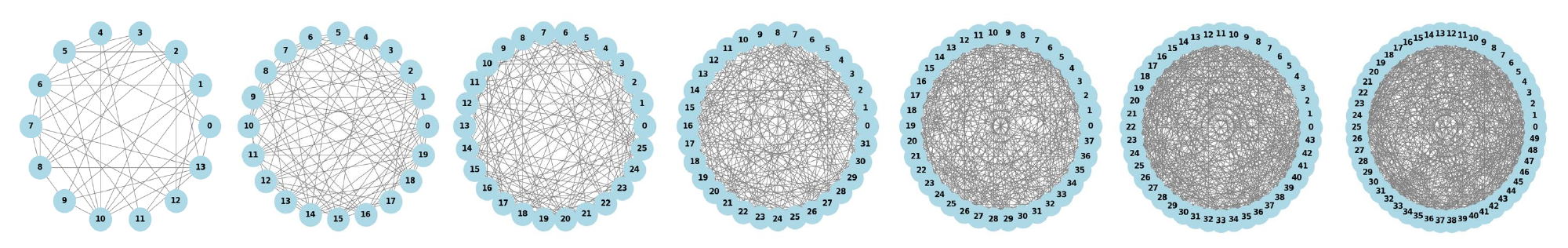}
    \caption{\textbf{Graphs corresponding to the stabilizer states produced in this work}. Each graph state is locally equivalent to the state we prepare via a $n \times 2n$ random Clifford circuit, for $n \in \{14, 20, 26, 32, 38, 44, 50\}$. Such circuits likely produce a random Erd\H{o}s--R\'enyi graph, which would saturate the entanglement upper bound and become challenging for tensor network simulators at large $n$.}
    \label{fig:fig-graph-states}
\end{figure}

\subsection{Estimating stabilizer state fidelity}

To evaluate the fidelity of our prepared state, we use direct fidelity estimation with Monte Carlo sampling\cite{PhysRevLett.106.230501,PhysRevLett.107.210404}. The fidelity \( F(\rho, \ket{\psi}) \) of the experimentally produced state \( \rho \) with respect to the target stabilizer state \( \ket{\psi} \) stabilized by \( \{S_1, \ldots, S_N\} \) is:

\[ F(\rho, \ket{\psi}) = \frac{1}{N} \sum_{i=1}^N \langle S_i \rangle_\rho \]

where \( \langle S_i \rangle_\rho = \text{Tr}(\rho S_i) \). This sum is subsampled by picking \( M \) random stabilizers \( S_1, \ldots, S_M \):

\[ F(\rho, \ket{\psi}) = \frac{1}{M} \sum_{i=1}^M \langle S_i \rangle_\rho + O(1/\sqrt{M}) \]

The standard error of this fidelity estimate is approximated using:

\[ SE \approx \sqrt{\frac{1}{M} \left( \hat{\sigma}_F^2 + \frac{1}{M} \sum_{i=1}^M \frac{\hat{\sigma}_i^2}{k_i} \right)} \]

where \( \hat{\sigma}_F \) is the standard deviation of the distribution of \( \langle S_i \rangle_\rho \), \( \hat{\sigma}_i \) is the standard deviation of the shot distributions used to evaluate \( \langle S_i \rangle_\rho \), and \( k_i \) is the number of shots used to sample \( \langle S_i \rangle_\rho \).

Since we are interested in single-shot error detection, we need to evaluate each bitstring for whether it has a logical error or not. This can be accomplished by computing the parity of the bitstring obtained from measuring the state in its stabilizer basis. The bitstring is logically correct if and only if it has even parity.

In our experiment, we chose to report the prepared state's fidelity as opposed to the logical error rate captured by our formalism. The two quantities are linearly related by the expression $F = 1-2\times LER$. Indeed, $LER$ captures the rate at which some stabilizer measurement fails. For a failure rate $r$, the corresponding stabilizer measurement expected value would be of $-r + (1-r) = 1 -2r$.

\section{Performance of checks in simulation and experiments} \label{sec:sim}

\begin{figure}
\centering
\includegraphics[width=\textwidth]{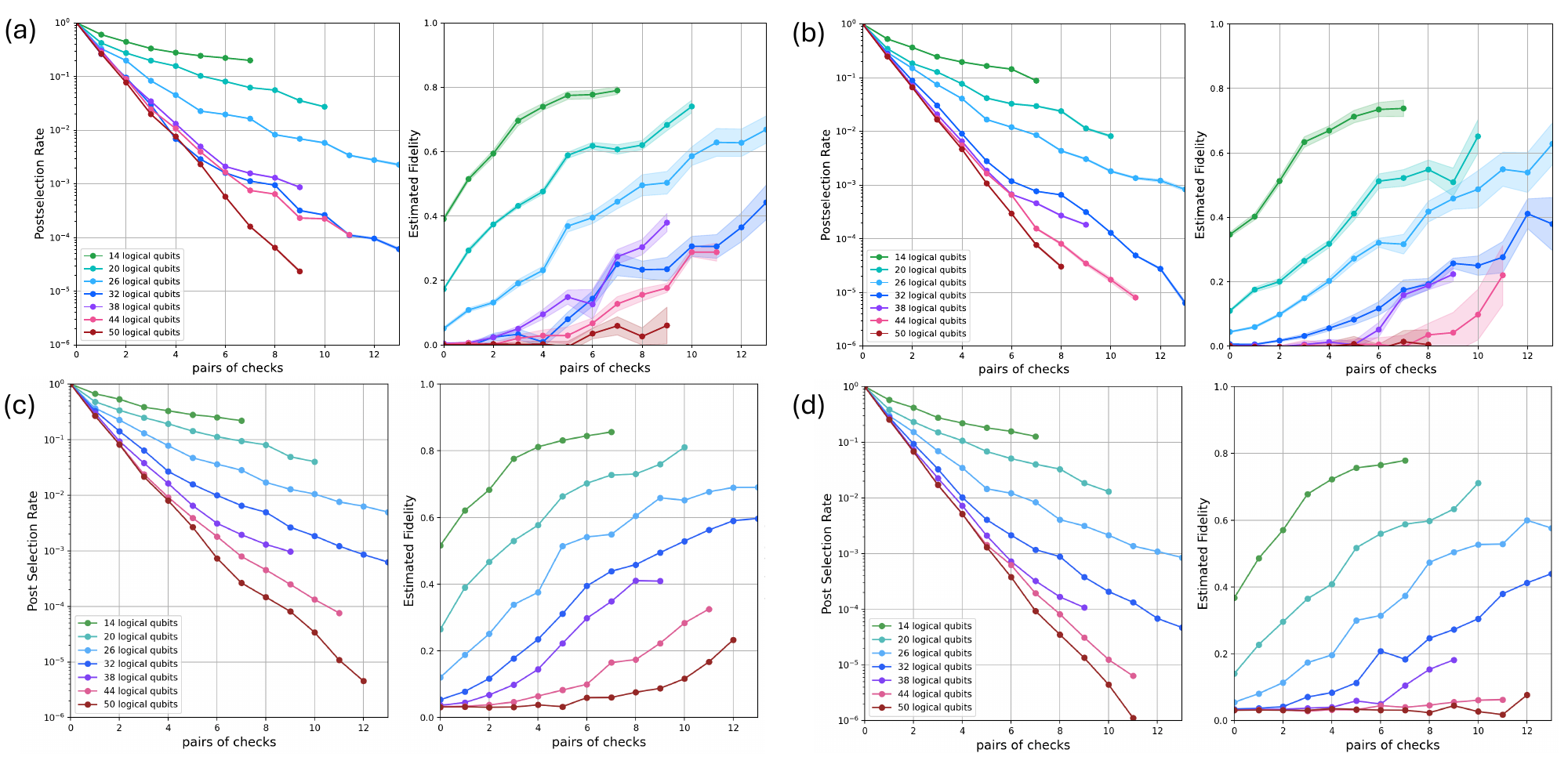}
\caption{\textbf{Experimental and numerical results.} The same circuits show better postselection rate and fidelity on $ibm\_kingston$ (a) compared to $ibm\_fez$ (b), which is consistent with the device parameters. A simple Pauli noise with gate error $0.3\%$ (c) and $0.5\%$ (d) and coherence $100~\mu\text{s}$ captures the experimental data well.}
\label{fig:kingston-vs-fez-vs-sims}
\end{figure}

In this section we show numerical simulations for the performance of spacetime Pauli checks, which exhibit good agreement with the experiments. We also present more experimental data from another IBM Heron r2 processor that shows how changes in device parameters change the performance of checks. Lastly we investigate how other qubit connectivities beyond what we had access to could change the performance of checks.

\paragraph{Simulation vs. experimental data} Figure~\ref{fig:kingston-vs-fez-vs-sims} shows experimental results obtained from $ibm\_kingston$ and $ibm\_fez$. The former was already presented in Figure 4 of the main text, but repeated here for comparison. We see that $ibm\_kingston$ has higher post-selection rate and better fidelity compared to $ibm\_fez$, which we attribute to better coherence, gates and readouts, see Table~\ref{tab:device}.

We also include numerical simulation for the same circuits with two different physical gate error rates. We see that a simple Pauli noise model has good qualitative agreement with the experimental data, further validating our Monte Carlo approach for scoring checks as detailed in Section~\ref{sec:cumulant}.

\begin{figure}[h!]
    \centering
    \begin{minipage}[c]{0.25\textwidth}
        \centering
        \begin{subfigure}[t]{\textwidth}
            \centering
            \includegraphics[width=0.65\textwidth]{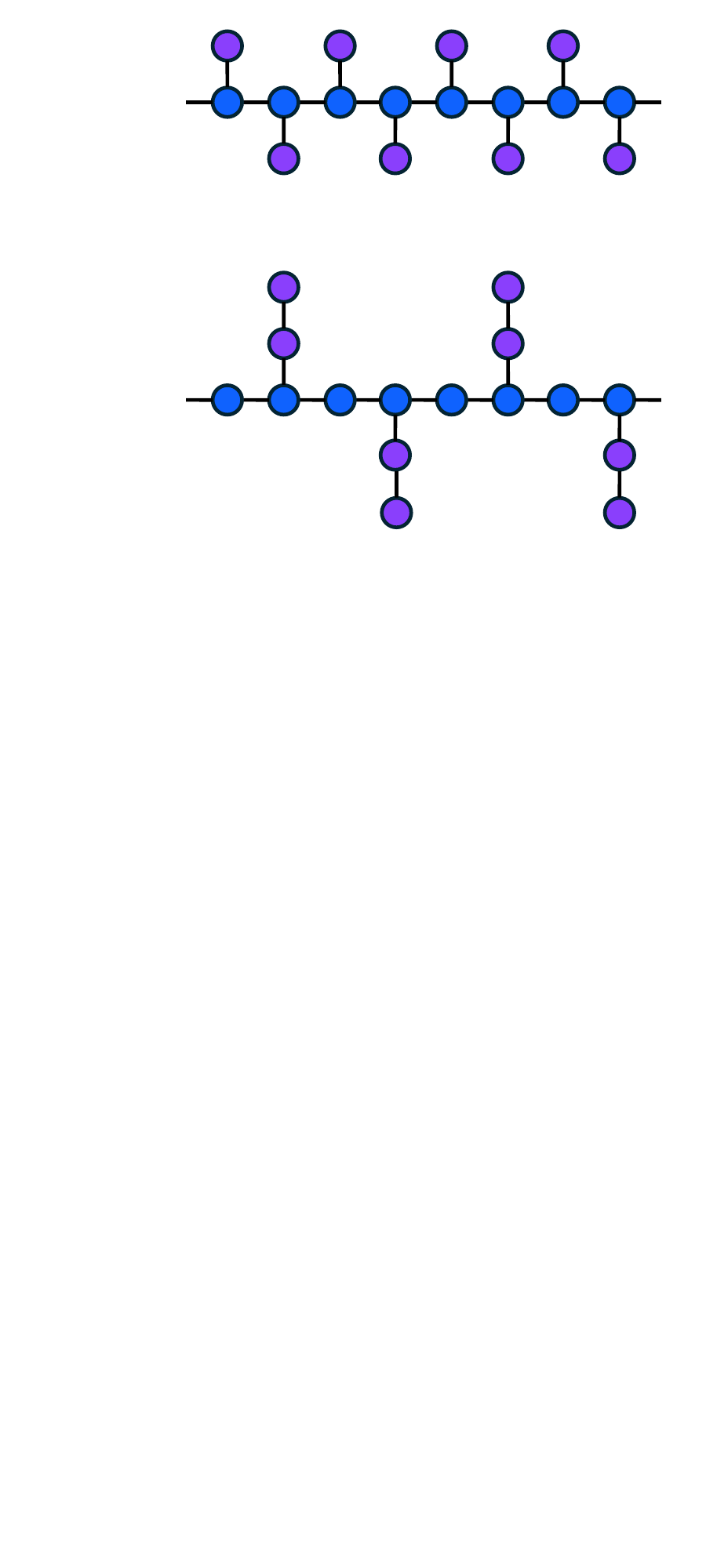}
            \caption{}
        \end{subfigure}

        \vspace{1em} 

        \begin{subfigure}[t]{\textwidth}
            \centering
            \includegraphics[width=0.65\textwidth]{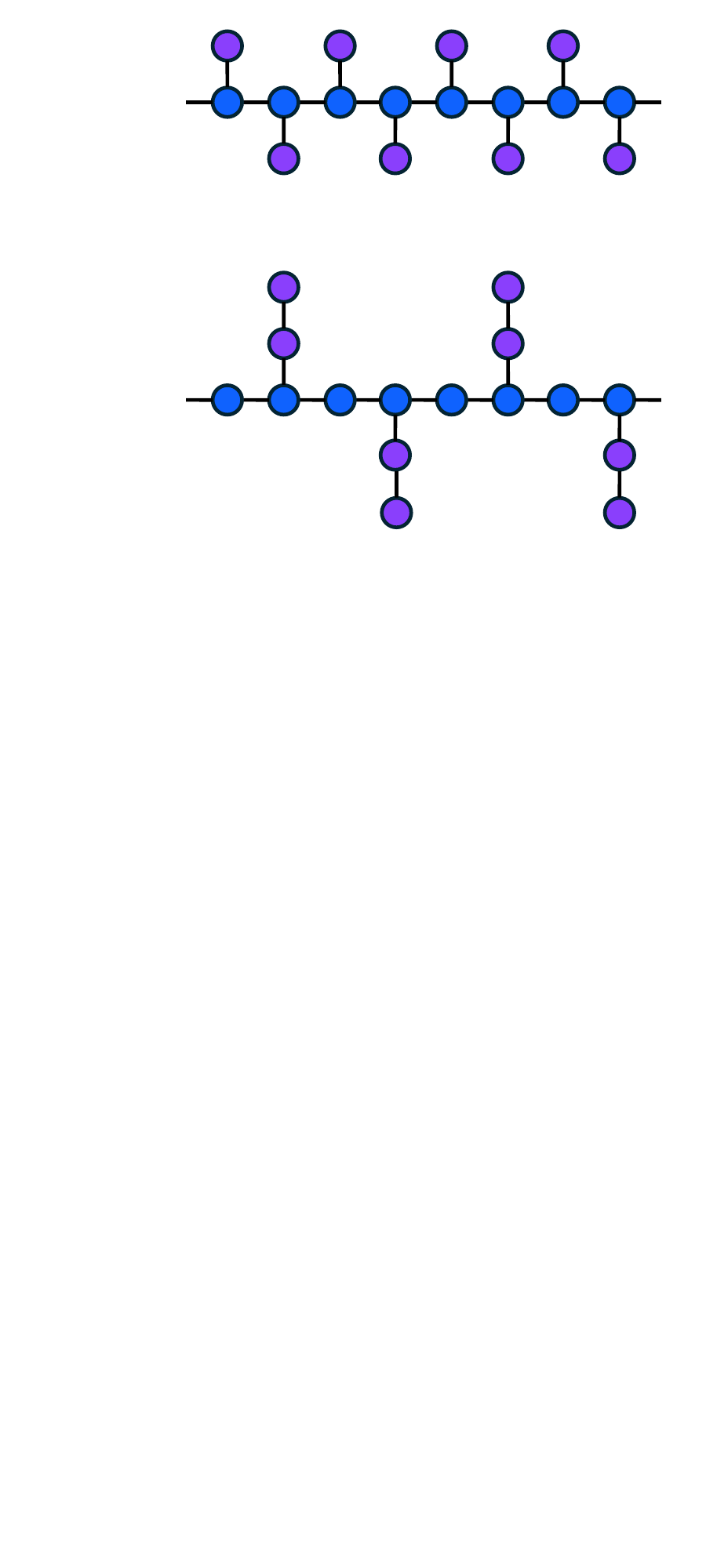}
            \caption{}
        \end{subfigure}

    \end{minipage}
    \hfill
    \begin{minipage}[c]{0.74\textwidth}
        \centering
        \begin{subfigure}[c]{\textwidth}
            \centering
            \includegraphics[width=\textwidth]{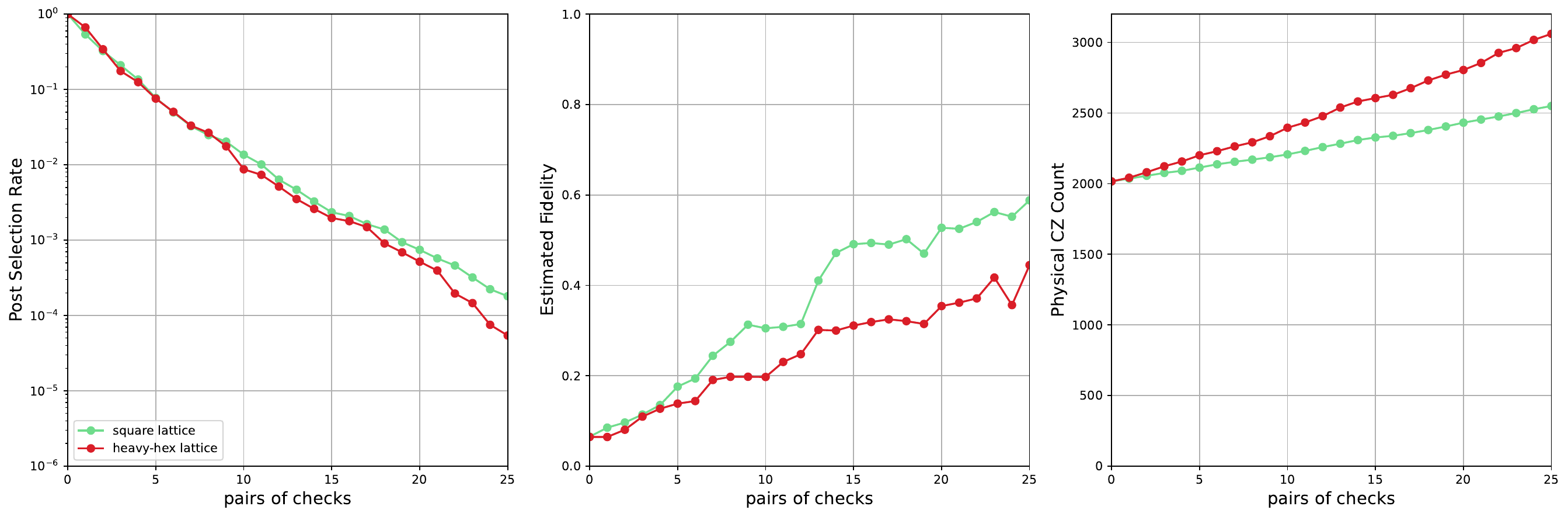}
            \caption{}
        \end{subfigure}
    \end{minipage}

    \caption{\textbf{Impact of connectivity on the performance of checks}. Error detection is simulated for a $64 \times 64$ circuit on two different qubit connectivities (a) and (b) which are compatible with a heavy hex lattice and square lattice respectively. Better connectivity results in better error detection performance.}
    \label{fig:square_vs_heavyhex}
\end{figure}

\paragraph{Impact of connectivity} The IBM Heron devices we used in this work have a heavy hex qubit connectivity, which lets us attach two ancilla to every other data qubit. In this way the ancilla pair can detect errors among themselves as well. However in simulation we also studied how higher connectivity would impact the performance of our method. For this purpose we simulate the same 64-qubit, depth-64 Clifford circuit once on the heavy hex lattice and once on a square lattice. As shown in Figure~\ref{fig:square_vs_heavyhex}, we see that better connectivity indeed improves the performance, even if now only a single ancilla is attached to each data. We attribute this to the increased number of accessible wires, which allows the checks to probe the payload at more points in space. These simulations were performed with a $0.1\%$ depolarizing error on $CZ$ gates and depolarizing error on idle wires of strength $1 - e^{(-t/T)}$ where $T=100\,\mu\text{s}$.

\section{Experimental methods}

In this section we provide an overview of the quantum devices used in our experiments, and we present the technique we employed to do single-shot fidelity estimation in a manner that is robust to state preparation and measurement (SPAM) errors.

\subsection{Device overview}
The primary device used in our experiments is $ibm\_kingston$, and we also report data from $ibm\_fez$. Both are 156-qubit Heron r2 processors from IBM, consisting of fixed-frequency transmon qubits with tunable couplers~\cite{stehlik2021tunable}.
Table~\ref{tab:device} summarizes the qubit and gate properties at the time of each experiment. The $ibm\_kingston$ device has an overall better gate error, readout error and coherence, which is also reflected in the experimental results in Figure~\ref{fig:kingston-vs-fez-vs-sims} and Figure~\ref{fig:m3-vs-nom3}. All experiments were conducted using Qiskit~\cite{javadi2024quantum} via the IBM Quantum Platform. We used the Qiskit Runtime Sampler, which returns a bitstring for each shot (sample), which we then accept or discard based on the syndrome that it reveals.

\begin{table}[h!]
\centering
\renewcommand{\arraystretch}{1.1}
\begin{tabular}{c|*{4}{*{4}{c}}}
\toprule
\multicolumn{13}{c}{\textbf{\textit{ibm\_kingston}}} \\
\midrule
\multirow{2}{*}{Exp.} & \multicolumn{3}{c}{CZ Error (\%)} & \multicolumn{3}{c}{Readout Error (\%)} & \multicolumn{3}{c}{T1 ($\mu\text{s}$)} & \multicolumn{3}{c}{T2 ($\mu\text{s}$)} \\
\cmidrule(lr){2-4} \cmidrule(lr){5-7} \cmidrule(lr){8-10} \cmidrule(lr){11-13}
& Med & Min & Max
& Med & Min & Max 
& Med & Min & Max 
& Med & Min & Max \\
\midrule
14 & 0.26 & 0.12 & 0.59 & 0.63 & 0.27 & 1.61 & 246.5 & 43.7 & 393.4 & 147.2 & 11.7 & 360.3 \\
20 & 0.26 & 0.12 & 0.61 & 0.72 & 0.17 & 5.93 & 247.5 & 120.8 & 435.4 & 159.2 & 10.1 & 305.2 \\
26 & 0.18 & 0.11 & 0.95 & 0.5 & 0.27 & 6.28 & 230.4 & 67.6 & 362.6 & 153.5 & 10.9 & 427.3 \\
32 & 0.27 & 0.14 & 3.67 & 0.57 & 0.15 & 5.93 & 257.9 & 21.6 & 406.4 & 145.6 & 10.1 & 458.4 \\
38 & 0.23 & 0.14 & 0.94 & 0.56 & 0.15 & 3.49 & 255.6 & 160.8 & 435.4 & 159.2 & 10.1 & 399.4 \\
44 & 0.25 & 0.14 & 1.32 & 0.61 & 0.2 & 4.93 & 220.7 & 43.3 & 381.7 & 113.8 & 10.3 & 320.7 \\
50 & 0.27 & 0.14 & 1.32 & 0.59 & 0.2 & 5.57 & 215.1 & 43.0 & 369.7 & 111.6 & 10.3 & 320.7 \\
\bottomrule

\toprule
\multicolumn{13}{c}{\textbf{\textit{ibm\_fez}}} \\
\midrule
\multirow{2}{*}{Exp.} & \multicolumn{3}{c}{CZ Error (\%)} & \multicolumn{3}{c}{Readout Error (\%)} & \multicolumn{3}{c}{T1 ($\mu\text{s}$)} & \multicolumn{3}{c}{T2 ($\mu\text{s}$)} \\
\cmidrule(lr){2-4} \cmidrule(lr){5-7} \cmidrule(lr){8-10} \cmidrule(lr){11-13}
& Med & Min & Max
& Med & Min & Max 
& Med & Min & Max 
& Med & Min & Max \\
\midrule
14 & 0.33 & 0.22 & 0.55 & 0.88 & 0.32 & 8.35 & 144.5 & 68.6 & 221.0 & 94.5 & 21.7 & 289.2 \\
20 & 0.39 & 0.23 & 100.0 & 1.0 & 0.3 & 6.7 & 158.3 & 61.5 & 238.0 & 85.1 & 11.2 & 251.6 \\
26 & 0.3 & 0.19 & 1.14 & 1.88 & 0.42 & 6.64 & 109.1 & 34.1 & 243.2 & 102.9 & 11.9 & 224.5 \\
32 & 0.38 & 0.24 & 3.68 & 0.93 & 0.37 & 11.65 & 150.5 & 85.5 & 240.7 & 98.3 & 13.2 & 256.0 \\
38 & 0.35 & 0.24 & 1.3 & 0.68 & 0.24 & 9.18 & 117.8 & 22.8 & 296.9 & 84.9 & 12.0 & 239.8 \\
44 & 0.34 & 0.21 & 0.82 & 0.68 & 0.15 & 10.6 & 145.4 & 83.8 & 291.1 & 81.3 & 14.5 & 263.2 \\
50 & 0.35 & 0.19 & 3.22 & 0.71 & 0.29 & 8.23 & 151.4 & 61.7 & 265.6 & 97.8 & 10.1 & 275.0 \\
\bottomrule
\end{tabular}
\caption{Device properties at the time each experiment reported in this paper was performed.}
\label{tab:device}
\end{table}

\subsection{Readout error mitigation}\label{sec:readout}

Readout errors are generally \( 5-10\times \) worse than gate errors, presenting a challenge in verifying the fidelity of prepared states without being overwhelmed by SPAM errors. Recall that stabilizer fidelity is determined by the parity of bits in the measured bitstring (see section~\ref{sec:stabilizer-state}). For an \( n \)-qubit stabilizer state with average stabilizer weight \( 3n/4 \), the logical error rate due to readout error rate \( p \) is:

\[ \sum_{k \text{ odd}} \binom{3n/4}{k} p^k (1-p)^{n-k} = \frac{1}{2}(1-(1-2p)^n) \]

which translates to a state fidelity of $(1-2p)^n$. For example, a 50-qubit graph state experiencing a \( 1\% \) readout error would show a fidelity of \( 36\% \), even in the absence of any gate errors. 

To estimate state fidelities in a SPAM-robust way, we reduce the number of measured qubits by performing part of the parity computation on the quantum circuit rather than in classical postprocessing.  By ``folding'' the stabilizer onto fewer qubits, we can simultaneously reduce the number of measurements and avoid bad readouts. To do this we divide the payload's support graph into $k$ subgraphs, picking the best readout within each subgraph as the root, and performing CNOTs from the leaves to the root in reverse breadth-first order. The measurements from all the roots are then used to compute the full parity classically. In our experiments we set $k = 4$, which creates 16 unique bitstrings in the output measurements. Figure~\ref{fig:folding} illustrates this process, as well as the tradeoff between circuit depth and number of measurements as a function of $k$ (number of qubits to fold onto). Note that the gates we use for folding are not themselves checked by detectors, so we do not artificially boost fidelity and therefore get a true lower bound on fidelity.

\begin{figure}
\centering
\begin{subfigure}{.18\textwidth}
\includegraphics[width=\textwidth]{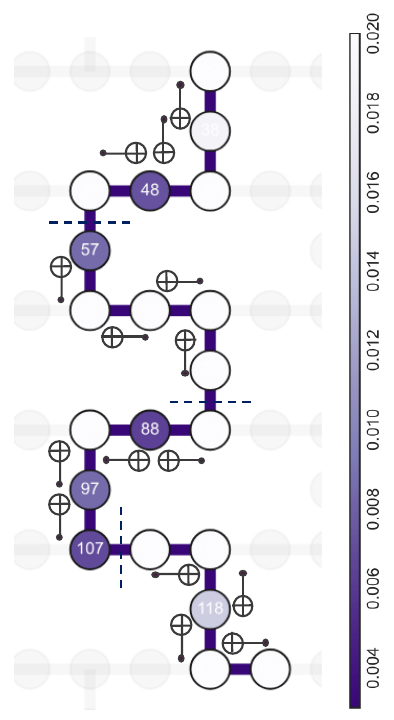}
\end{subfigure}
\hfill
\begin{subfigure}{.72\textwidth}
\includegraphics[width=\textwidth]{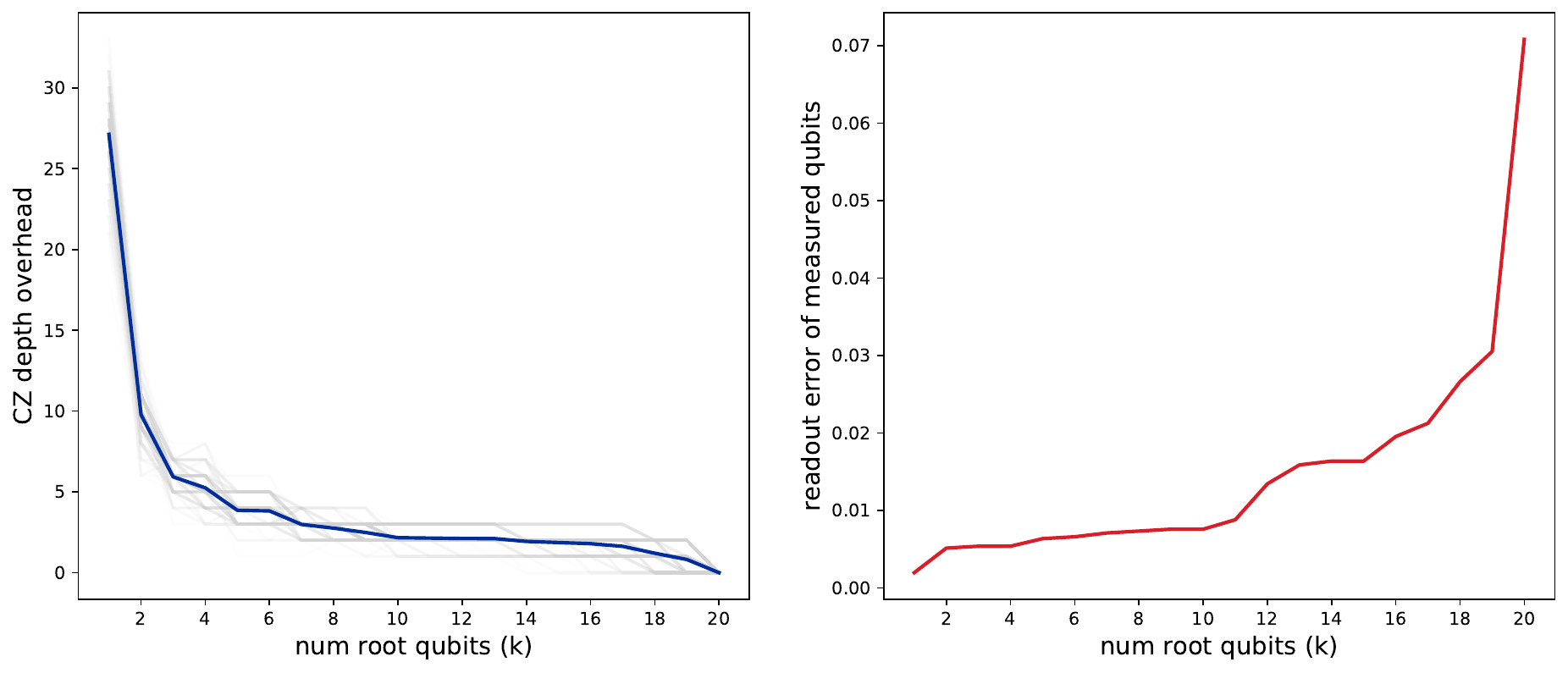}
\end{subfigure}
\caption{\textbf{Folding a stabilizer before measurement to reduce readout errors.} (a) A 20-qubit example where the graph is partitioned into 4 folds, and CNOTs are used to compute the parity within each subgraph onto the root with the best readout error (darker qubits). The state's parity is computed classically by measuring the roots only. (b) While this procedure removes bad readouts, it introduces some depth overhead to the measurement. The plot shows this tradeoff averaged over 200 random Pauli measurements for the readout errors in (a). Choosing a small number of roots to fold onto presents a good tradeoff in practice.}
\label{fig:folding}
\end{figure}

Having reduced the number of unique bitstrings, we can also perform statistical corrections on the output bitstring probabilities, which would otherwise be infeasible for a flat distribution with many unique bitstrings.  We use the M3 method~\cite{nation2021scalable} for this purpose which characterizes readout errors to build a model of readout transition probabilities and then statistically corrects the bitstring populations in post-processing. Due to the fact that this method is not compatible with our objective of single-shot error detection, we avoided reporting these results in the main text. However for completeness we report them here, which shows that these correction do produce some boost in the estimated fidelity. However this is not very significant since we avoided the worst readouts during stabilizer folding. We see the boost due to statistical corrections to be larger for $ibm\_fez$, which is consistent with its worse readout.

\begin{figure}
\centering
\includegraphics[width=\textwidth]{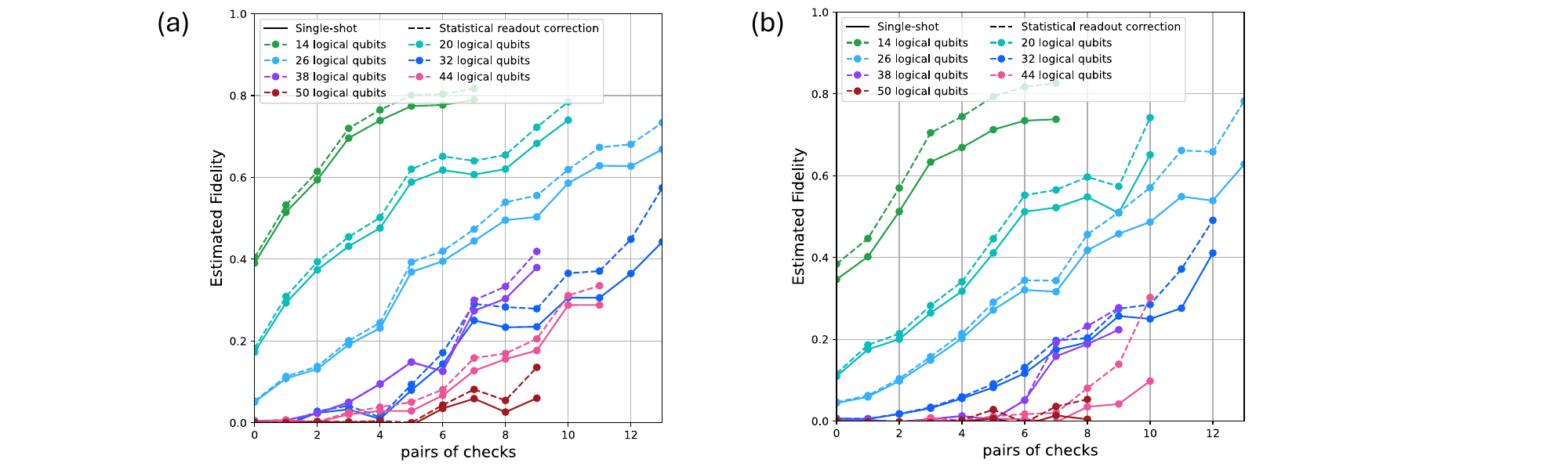}
\caption{\textbf{Single-shot readout (solid) vs. statistically corrected readouts (dashed).} Our stabilizer folding approach already reduces the impact of readout errors, so there is not a wide gap in fidelity when estimated on single-shot basis vs. statistically corrected distributions. On $ibm\_kingston$ (a) the readout errors are generally better than $ibm\_fez$ (b), leading to a smaller gap (as well as a better post-selection rate, see Figure~\ref{fig:kingston-vs-fez-vs-sims}.)}
\label{fig:m3-vs-nom3}
\end{figure}

\subsection{Further experimental details}

\paragraph{Circuit generation}
Bare payload circuits are generated by alternating layers of $CZ$ gates on a path of $n$ qubits with randomly-chosen single-qubit Cliffords. On a path there are two unique $CZ$ layers, and we repeat each $n$ times for a total 2-qubit depth of $2n$,  which is  asymptotically sufficient to implement any graph state in the LNN setting~\cite{maslov2018shorter}.

We pick ancilla for checks from the qubits that dangle from data qubits. On the heavy hex architecture, every other qubit is a degree-3 node and can thus be an anchor for a check. Therefore for $n$ logical qubits we have $\left\lfloor \frac{n}{2} \right\rfloor$ anchors. 

In the experiments we progressively add more ancilla and record postselection rate and estimate state fidelity. We pick the order of the ancilla in a binary tree fashion, first connecting to the middle of the payload and then recursively to the subgraphs induced. This maximizes the coverage of wires in the earlier rounds, allowing us to see more boost in the signal while the post-selection rate is still high, which is particularly important for large circuits whose postselection rate drops rapidly. In each experiment we add rounds of checks until either we run out of them, or until the postselection rate drops below $10^{-5}$. For each stabilizer state we estimate its fidelity by choosing 5 random stabilizers, which we measure at each round of checks.

\paragraph{Qubit selection}
Due to the size of circuits we consider, it is important that they are supported on physical qubits with low errors in order to have high postselection rates and low logical error rates. Prior to performing the experiments, we enumerate all subgraphs on the hardware that are isomorphic to the interaction graph of our circuits, which includes the payload as well as the connected ancilla. We then weigh the subgraphs by the average CZ error rate in them. We neglect readout error in this ranking, since readout errors on the payload can be mitigated as detailed in the next section. Moreover, readout errors on the checks primarily affect postselection, since the rejection criterion is essentially the {\tt OR} of check outcomes which is much more likely to reject good shots than to accept bad shots. 

Our experiments can be divided into a small regime ($n \in \{14, 20, 26, 32\}$) and a large regime ($n \in \{38, 44, 50\}$). In the small regime  we add checks in pairs (i.e. as an ancilla path of length 2 as detailed in Section~\ref{sec:multicheck}), while in the large regime we add them individually. This is to keep the number of qubits reasonable and compatible with the hardware at hand. In all plots, the X axis corresponds to a pair of checks. One can notice in Figure 4 of the main text that in the large regime the slope of number of gates grows more mildly because both checks are directly attached to the payload and no SWAPs are used. Since the small regime has more dangling qubits, it can suppress more errors (including in the ancilla qubits themselves as detailed before), but it becomes harder to avoid some bad qubits due to the more stringent connectivity requirements.

\paragraph{Sampling speed} 
Postselection rate in error detection is fundamentally upper bounded by the inverse of circuit fidelity, even in the ideal case where the check circuitry itself does not introduce new errors. Therefore to demonstrate error detection on the largest circuits of this work, we require fast circuit execution to generate many samples, and retain sufficiently many good samples after postselection. Fortunately superconducting quantum computers have high gate and reset speeds, which make them ideally suited for error detection experiments. At the time of our experiments, the $ibm\_kingston$ and $ibm\_fez$ quantum computers had a ``circuit layer operations per second'' (CLOPS) rate of 250K and 195K respectively~\cite{wack2021scale}. We take as many samples as necessary to reduce the error bar to a reasonable amount. This means more samples are needed at lower post selection rates. We performed up to 6 million shots per single stabilizer measurement, which requires about 25 minutes of QPU time.

\paragraph{Dynamical decoupling} The payload circuits that we consider are dense, however the inclusion of check qubits means that the payload is probed via controlled-Pauli gates at a few distinct points in time, causing long idling gaps on the control (check) qubit.  We use a simple \( X_+-X_- \) dynamical decoupling sequence~\cite{Hahn1950,Viola1998} to suppress dephasing errors during idle periods of the circuit. Crosstalk is suppressed at the physical level via tunable couplers, so we did not use more complex circuit-level DD sequences~\cite{seif2024suppressing,coote2025resource}.

\bibliography{biblio}
\bibliographystyle{naturemag}